\newtheorem{theorem}{Theorem}
\newtheorem{acknowledgement}{Acknowledgement}
\newtheorem{definition}{Definition}
\newtheorem{proposition}{Proposition}
\newtheorem{remark}{Remark}
\newtheorem{conjecture}{Conjecture}
\newtheorem{proof}{Proof}
\begin{document}
\title{Study of the forward Dirichlet boundary value problem for the two-dimensional Electrical Impedance Equation}
\author{M. P. Ramirez T.\\
\small{Communications and Digital Signal Processing Group,}\\ 
\small{Faculty of Engineering, La Salle University,}\\
\small{B. Franklin 47, C.P. 06140, Mexico.}}

\date{}
\maketitle
\begin{abstract}
Using a conjecture that allows to approach separable-variables conductivity functions, the elements of the Modern Pseudoanalytic Function Theory are used, for the first time, to numerically solve the Dirichlet boundary value problem of the two-dimensional Electrical Impedance Equation, when the conductivity function arises from geometrical figures, located within bounded domains.
\end{abstract}

\section{Introduction}

The study of the solutions corresponding to the Electrical Impedance Equation
\begin{equation}
\nabla\cdot\left(\sigma\nabla u\right)=0,
\label{int:00}
\end{equation}
where $\sigma$ represents the conductivity and $u$ denotes the electric potential, is fundamental for the proper understanding of a wide variety of boundary value problems, that posses special relevance in different branches of Mathematical Physics. Among many important problems, it is particularly interesting the one corresponding to the Electrical Impedance Tomography, because of its importance in Applied Physics and Engineering. As a matter of fact, only taking into account the Medical Imaging applications, no doubts remain about its relevance.

The study of this inverse problem, correctly posed in mathematical terms by A.P. Calderon \cite{calderon}, is based upon iterative methods that employ solutions of the forward problem, attempting to fulfil a certain boundary condition (usually upcoming from physiological measurements), by introducing variations in the conductivity function at every step, in order to reduce the difference between the theoretical $u$ and the measured one (see e.g., the classical work \cite{webster}).

Yet, the mathematical complexity of (\ref{int:00}) posed so strong challenges, that the Electrical Impedance Tomography remained only as an alternative Medical Imaging technique, seldom considered among the basic clinical applications. As a matter of fact, the Electrical Impedance Tomography is still considered an ill posed problem.

In this direction, the mathematical foundations dedicated to this topic, perhaps received one of the most important contributions when V. Kravchenko in 2005 \cite{kra2005}, and independently K. Astala and L. P\"aiv\"arinta in 2006 \cite{astala}, discovered that the two-dimensional case of (\ref{int:00}) was completely equivalent to a special kind of Vekua equation \cite{vekua}, which had been deeply studied in a variety of works, been the most important those published by L. Bers \cite{bers} and by I. Vekua \cite{vekua}.

The list of novel works arising after such discovering, is as long as it is interesting. Still, it shall be pointed out that just some of them are dedicated to the Engineering applications, since it is not clear how to model a wide variety of physical phenomena, in order to make them susceptible to be analysed by means the Modern Theory of Pseudoanalytic Functions (see e.g. \cite{kpa}).

This work intends to make a positive contribution in this direction. Already in \cite{ioprrh} was posed a basic idea for interpolating values of conductivity within bounded domains, in order to obtain a certain class of analytic representations: Two-dimensional separable-variables functions, one of the keys to fully applied the novel mathematical methods into Engineering problems.

In this work, a more general methodology is posed, based upon the main Conjecture arising from \cite{ioprrh}. Starting with some examples for which the conductivity functions are known in exact form, the elements of the Pseudoanalytic Function Theory are used, for the first time, to approach solutions of the forward boundary value problem for (\ref{int:00}), considering conductivity distributions upcoming from geometrical figures.

Even the full set of examples is provided for a circular domain, the results can be extended to a wide variety of bounded domains. Thus, the methods provided along these pages, could well be ready for studying images corresponding already to physical experimental models.

\section{Preliminaries}

Following \cite{bers}, let the complex-valued functions $F,G$ satisfy the condition
\begin{equation}
\mbox{Im}\left(\overline{F}G\right)>0,
\label{pre:00}
\end{equation}
where $\overline{F}$ denotes the complex conjugation of $F$: $\overline{F}=\mbox{Re}F-i\mbox{Im}F$, and $i$ is the standard imaginary unit $i^{2}=-1$. Thus, any complex-valued function $W$ can be expressed by means of the linear combination of $F$ and $G$:
\[
W=\phi F+\psi G,
\]
where $\phi$ and $\psi$ are purely real functions. Two complex functions that fulfil (\ref{pre:00}) shall be called a \emph{generating pair} $(F,G)$. Bers \cite{bers} introduced the $(F,G)$-derivative of the function $W$ according to the expression:
\begin{equation}
\partial_{(F,G)}W=\left(\partial_{z}\phi\right)F+\left(\partial_{z}\psi\right)G.
\label{pre:01}
\end{equation}
But this derivative will exist if and only if
\begin{equation}
\left(\partial_{\overline{z}}\phi\right)F+\left(\partial_{\overline{z}}\psi\right)G=0.
\label{pre:02}
\end{equation}
Here
\[
\partial_{z}=\partial_{x}-i\partial_{y},\ \ \ \partial_{\overline{z}}=\partial_{x}+i\partial_{y}.
\]
Notice that these operators are classically introduced with the factor $\frac{1}{2}$, but it will result somehow more convenient to omit it int his work.

Introducing the functions
\begin{eqnarray}
A_{(F,G)}=\frac{\overline{F}\partial_{z}G-\overline{G}\partial_{z}F}{F\overline{G}-G\overline{F}},\ \ \ a_{(F,G)}=-\frac{\overline{F}\partial_{\overline{z}}G-\overline{G}\partial_{\overline{z}}F}{F\overline{G}-G\overline{F}}, \nonumber \\
B_{(F,G)}=\frac{F\partial_{z}G-G\partial_{z}F}{F\overline{G}-G\overline{F}},\ \ \ b_{(F,G)}=-\frac{G\partial_{\overline{z}}F-F\partial_{\overline{z}}G}{F\overline{G}-G\overline{F}};
\label{pre:03}
\end{eqnarray}
the expression (\ref{pre:01}) of the $(F,G)$-derivative will turn into
\begin{equation}
\partial_{(F,G)}W=\partial_{z}W-A_{(F,G)}W-B_{(F,G)}\overline{W},
\label{pre:04}
\end{equation}
as well the condition (\ref{pre:02}) can be written as
\begin{equation}
\partial_{\overline{z}}W-a_{(F,G)}W-b_{(F,G)}\overline{W}=0.
\label{pre:05}
\end{equation}

The functions defined in (\ref{pre:03}) are called the \emph{characteristic coefficients} of the generating pair $(F,G)$, and the functions $W$ satisfying the condition (\ref{pre:05}) are named $(F,G)$-pseudoanalytic. Indeed, the equation(\ref{pre:05}) is know as the \emph{Vekua equation} \cite{vekua}, and in many senses is the foundation of the present work.

The following sentences were originally presented in \cite{bers} and \cite{kpa}, and they have been slightly adapted for the purposes of this paper.
\begin{theorem}
\label{th:00}
The elements of the generating pair $(F,G)$ are $(F,G)$-pseudoanalytic:
\[
\partial_{(F,G)}F=\partial_{(F,G)}G=0.
\]
\end{theorem}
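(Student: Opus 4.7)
The plan is to apply the definition (\ref{pre:01}) of the $(F,G)$-derivative directly to $W=F$ and to $W=G$. The condition (\ref{pre:00}) implies $F\overline{G}-G\overline{F}\neq 0$ pointwise, which means $F$ and $G$ are $\mathbb{R}$-linearly independent everywhere, so every complex-valued function $W$ admits a unique decomposition $W=\phi F+\psi G$ with real $\phi,\psi$. This uniqueness is what makes the definition (\ref{pre:01}) well-posed and is the one nontrivial input needed.

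For $W=F$, the representation is the trivial one: $\phi\equiv 1$ and $\psi\equiv 0$. Both are real constants, so $\partial_{\overline{z}}\phi=\partial_{\overline{z}}\psi=0$ and the existence condition (\ref{pre:02}) holds automatically. Substituting into (\ref{pre:01}) then yields $\partial_{(F,G)}F=(\partial_{z}1)F+(\partial_{z}0)G=0$. The same argument, now with $\phi\equiv 0$ and $\psi\equiv 1$, gives $\partial_{(F,G)}G=0$.

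There is no real obstacle here; the theorem is essentially a tautology once (\ref{pre:01}) is accepted as the definition. If instead one works from the equivalent form (\ref{pre:04}), the statement becomes the pair of algebraic identities $A_{(F,G)}F+B_{(F,G)}\overline{F}=\partial_{z}F$ and $A_{(F,G)}G+B_{(F,G)}\overline{G}=\partial_{z}G$, which are verified by direct substitution of the expressions (\ref{pre:03}); these amount to saying that $A_{(F,G)}$ and $B_{(F,G)}$ were engineered precisely so that the generating pair lies in the kernel of the operator on the right-hand side of (\ref{pre:04}).
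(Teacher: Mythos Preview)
Your argument is correct. The paper itself does not supply a proof of Theorem~\ref{th:00}; it merely records the statement as a known fact taken from \cite{bers} and \cite{kpa}. Your direct verification from the definition (\ref{pre:01}) using the trivial decompositions $F=1\cdot F+0\cdot G$ and $G=0\cdot F+1\cdot G$ is exactly the standard proof found in those references, and your remark that the uniqueness of the real decomposition follows from $\mbox{Im}(\overline{F}G)>0$ is the right justification for why the definition is unambiguous.
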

\begin{remark}
Let $p$ be a non-vanishing function within a bounded domain $\Omega\in\mathbb{R}^{2}$. The functions
\begin{equation}
F_{0}=p, \ \ \ G_{0}=\frac{i}{p},
\label{pre:06}
\end{equation}
constitute a generating pair, whose characteristic coefficients are
\begin{eqnarray}
A_{\left(F_{0},G_{0}\right)}=a_{\left(F_{0},G_{0}\right)}=0,\nonumber \\
B_{\left(F_{0},G_{0}\right)}=\frac{\partial_{z}p}{p},\ \ \ a_{\left(F_{0},G_{0}\right)}=\frac{\partial_{\overline{z}}p}{p}.
\label{pre:07}
\end{eqnarray} 
\end{remark}
\begin{definition}
Let $\left(F_{0},G_{0}\right)$ and $\left(F_{1},G{1}\right)$ be two generating pairs of the form (\ref{pre:07}), and let their characteristic coefficients satisfy the relation
\begin{equation}
B_{\left(F_{1},G_{1}\right)}=-b_{\left(F_{0},G_{0}\right)}.
\label{pre:08}
\end{equation} 
Thus the pair $\left(F_{1},G_{1}\right)$ will be called a successor of the pair $\left(F_{0},G_{0}\right)$, whereas $\left(F_{0},G_{0}\right)$ will be called a predecessor of $\left(F_{1},G_{1}\right)$.
\end{definition}
\begin{definition}
Let 
\[
\left\lbrace \left(F_{m},G_{m}\right) \right\rbrace,\ m=0,\pm 1,\pm 2,...
\]
be a set of generating pairs, and let every $\left(F_{m+1},G_{m+1}\right)$ be a successor of $\left(F_{m},G_{m}\right)$. Thus, the set $\left\lbrace \left(F_{m},G_{m}\right) \right\rbrace$ is called a generating sequence. Moreover, if there exist a number $k$ such that $\left(F_{m},G_{m}\right)=\left(F_{m+k},G_{m+k}\right)$ the generating sequence is said to be periodic, with period $k$.

Finally, if $\left(F,G\right)=\left(F_{0},G_{0}\right)$, the generating pair $\left(F,G\right)$ will be embedded into the generating sequence $\left\lbrace \left(F_{m},G_{m}\right) \right\rbrace$.
\end{definition}

\begin{theorem}
\label{th:01}
Let $\left(F,G\right)$ be a generating pair of the form (\ref{pre:06}), and let $p$ be a separable-variables function:
\[
p=p_{1}(x)p_{2}(y),
\]
where $x,y\in\mathbb{R}$. Thus $\left(F,G\right)$ is embedded into a periodic generating sequence, with period $k=2$, such that
\[
F_{m}=\frac{p_{2}(y)}{p_{1}(x)},\ \ G_{m}=i\frac{p_{1}(x)}{p_{2}(y)};
\]
when $m$ is even, and  
\[
F_{m}=p_{1}(x)p_{2}(y),\ \ G_{m}=\frac{i}{p_{1}(x)p_{2}(y)};
\]
when $m$ is odd. 

Moreover, if in particular $p_{1}(x)=1$, it is easy to see that the generating sequence in which $\left(F,G\right)$ is embedded will be also periodic, but it will posses period $k=1$.
\end{theorem}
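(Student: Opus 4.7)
The plan is to construct the sequence explicitly by iterating the successor condition (\ref{pre:08}), exploiting the fact that both candidate pairs will again have the form (\ref{pre:06}), so Remark 1 delivers their characteristic coefficients essentially for free. First I would compute, using the separability of $p$,
\[
b_{(F_{0},G_{0})} \;=\; \frac{\partial_{\overline{z}}(p_{1}p_{2})}{p_{1}p_{2}} \;=\; \frac{p_{1}'(x)}{p_{1}(x)} + i\,\frac{p_{2}'(y)}{p_{2}(y)}.
\]
(Here I read the second $a$ in (\ref{pre:07}) as a typo for $b$, consistent with $a_{(F_{0},G_{0})}=0$.)

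The natural candidate for a successor is $(F_{1},G_{1}) = \bigl(p_{2}(y)/p_{1}(x),\; i\,p_{1}(x)/p_{2}(y)\bigr)$, since this again has the shape (\ref{pre:06}) with $q(x,y):=p_{2}(y)/p_{1}(x)$ playing the role of $p$. A direct differentiation gives $\partial_{z}q/q = -p_{1}'/p_{1} - i\,p_{2}'/p_{2}$, which matches $-b_{(F_{0},G_{0})}$, so (\ref{pre:08}) is satisfied and $(F_{1},G_{1})$ is indeed a successor of $(F_{0},G_{0})$. For the next step, Remark 1 yields $b_{(F_{1},G_{1})} = \partial_{\overline{z}}q/q = -p_{1}'/p_{1} + i\,p_{2}'/p_{2}$, while $B_{(F_{0},G_{0})} = \partial_{z}p/p = p_{1}'/p_{1} - i\,p_{2}'/p_{2}$; comparing, one finds $B_{(F_{0},G_{0})} = -b_{(F_{1},G_{1})}$. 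Thus $(F_{0},G_{0})$ is itself a successor of $(F_{1},G_{1})$, closing the cycle $(F_{0},G_{0}) \to (F_{1},G_{1}) \to (F_{0},G_{0}) \to \cdots$ and (by the same argument running backwards) supplying a two-sided periodic generating sequence with $k=2$.

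The degenerate case is immediate: if $p_{1}(x)\equiv 1$, then $q = p_{2}/1 = p$, so $(F_{1},G_{1})$ coincides with $(F_{0},G_{0})$ and the sequence is constant, giving period $k=1$. No deep obstacle arises — the argument is a mechanical verification of (\ref{pre:08}) at two consecutive levels. The only delicate point I expect to watch is bookkeeping: carrying the non-standard convention for $\partial_{z},\partial_{\overline{z}}$ (without the $\tfrac12$ factor, as explicitly warned in the preliminaries) consistently through (\ref{pre:07}), and aligning the even/odd labelling of the $F_{m},G_{m}$ in the statement with the starting pair $(F_{0},G_{0})$.
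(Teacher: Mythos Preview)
Your argument is correct and is exactly the standard direct verification: compute the characteristic coefficients of the two candidate pairs via Remark~1 and check the successor relation (\ref{pre:08}) in both directions. The paper itself does not supply a proof of Theorem~\ref{th:01}; the result is quoted from \cite{bers} and \cite{kpa} (as announced in the sentence preceding Theorem~\ref{th:00}), so there is no in-paper proof to compare against. Your computation is the natural one and would be accepted as a proof.

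One point you already flagged deserves emphasis rather than dismissal. The even/odd labelling in the statement is genuinely inconsistent with the convention that $(F,G)=(F_{0},G_{0})$: at $m=0$ (even) the statement gives $F_{0}=p_{2}/p_{1}$, whereas the embedding forces $F_{0}=p=p_{1}p_{2}$. Your proof actually establishes the sequence with the parities swapped relative to the displayed formulas, which is the correct reading; it would be worth saying this explicitly rather than leaving it as a ``bookkeeping'' aside, since a reader comparing your work to the statement line by line will otherwise see a mismatch.
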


L. Bers also introduced the concept of the $\left(F_{0},G_{0}\right)$-integral of a complex-valued function $W$. The detailed conditions for its existence can be found in \cite{bers}.

\begin{definition}
Let $\left(F_{0},G_{0}\right)$ be a generating pair of the form. Its adjoin generating pair $\left(F_{0}^{*},G_{0}^{*}\right)$ is defined according to the formulas
\[
F_{0}^{*}=-iF_{0},\ \ G_{0}^{*}=-iG_{0}.
\]
\end{definition}
\begin{definition}
The $\left(F_{0},G_{0}\right)$-integral of a complex-valued function $W$ (when it exists \cite{bers}) is defined as:
\[
\int_{\Gamma}Wd_{\left(F_{0},G_{0}\right)}z=F_{0}\mbox{Re}\int_{\Gamma}G_{0}^{*}Wdz+G_{0}\mbox{Re}\int_{\Gamma}F_{0}^{*}Wdz,
\]
where $\Gamma$ is a rectifiable curve within a domain $\Omega\in\mathbb{C}$. Specifically, if we consider the $\left(F_{0},G_{0}\right)$-integral of the $\left(F_{0},G_{0}\right)$-derivative of $W$, we  will have:
\begin{equation}
\int_{z_{0}}^{z}\partial_{\left(F_{0},G_{0}\right)}Wd_{\left(F_{0},G_{0}\right)}z=-\phi (z_{0})F(z)-\psi (z_{0})G(z)+W(z).
\label{pre:09}
\end{equation}
Here $z=x+iy$, and $z_{0}$ is a fixed point in the complex plane. But according to the Theorem \ref{th:00}, the $\left(F_{0},G_{0}\right)$-derivative of $F$ and of $G$ vanish identically, thus the expression (\ref{pre:09}) can be considered as the $\left(F_{0},G_{0}\right)$-antiderivative of $\partial_{\left(F_{0},G_{0}\right)}W$.
\end{definition}

\subsection{Formal Powers}

\begin{definition}
\label{def:00}
The formal power $Z_{m}^{(0)}\left(a_{0},z;z_{0}\right)$ belonging to the generating pair $\left(F_{m},G_{m}\right)$, with formal degree $(0)$, complex constant coefficient $a_{0}$, center at $z_{0}$, and depending upon $z=x+iy$, is defined according to the expression:
\[
Z_{m}^{(0)}\left(a_{n},z;z_{0}\right)=\lambda F(z)+\mu G(z),
\]
where $\lambda$ and $\mu$ are complex constants that fulfil the condition:
\[
\lambda F(z_{0})+\mu G(z_{0})=a_{0}.
\]
The formal powers with higher degrees are calculated according to the recursive formulas:
\begin{equation}
Z_{m}^{(n)}\left(a_{n},z;z_{0}\right)=n\int_{z_{0}}^{z}Z_{m-1}^{(n-1)}\left(a_{n},z;z_{0}\right)d_{\left(F_{m},G_{m}\right)}z,
\label{pre:10}
\end{equation}
where $n=1,2,3,...$ Notice the integral operators in the right side of the last expression are $\left(F_{m},G_{m}\right)$-antiderivatives.
\end{definition}

\begin{theorem}
The formal powers posses the following properties:
\begin{enumerate}
\item Every $Z_{m}^{(n)}\left(a_{n},z;z_{0}\right),\ n=0,1,2,...$ is an $\left(F_{m},G_{m}\right)$-pseudoanalytic function.
\item Let $a_{n}=a'_{n}+ia''_{n}$, where $a'_{n}$ and $a''_{n}$ are real constants. The following relation holds
\begin{equation}
Z_{m}^{(n)}\left(a_{n},z;z_{0}\right)=a'_{n}Z_{m}^{(n)}\left(1,z;z_{0}\right)+a''Z_{m}^{(n)}\left(i,z;z_{0}\right).
\end{equation}
\end{enumerate}
\end{theorem}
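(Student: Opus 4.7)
The plan is to prove both properties simultaneously by induction on the formal degree $n$, leveraging the recursive Definition~\ref{def:00} and the real-linear structure of the Bers integral.

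For the base case $n=0$, the condition $\lambda F_{m}(z_{0})+\mu G_{m}(z_{0})=a_{0}$ is, once real and imaginary parts are separated, a $2\times 2$ real linear system for $\lambda$ and $\mu$ whose coefficient matrix is invertible precisely because (\ref{pre:00}) guarantees that $F_{m}(z_{0})$ and $G_{m}(z_{0})$ are $\mathbb{R}$-linearly independent. Hence $Z_{m}^{(0)}(a_{n},z;z_{0})=\lambda F_{m}(z)+\mu G_{m}(z)$ is a real-linear combination of $F_{m}$ and $G_{m}$. Since Theorem~\ref{th:00} tells us both generators satisfy the Vekua equation (\ref{pre:05}), and (\ref{pre:05}) is $\mathbb{R}$-linear in $W$ (the $\overline{W}$ term preserves real combinations), property~1 follows at $n=0$. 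For property~2, writing $a_{0}=a'_{0}+ia''_{0}$ and using linearity of the above system in the right-hand side, the unique solution $(\lambda,\mu)$ for $a_{0}$ decomposes as $a'_{0}(\lambda_{1},\mu_{1})+a''_{0}(\lambda_{i},\mu_{i})$, where $(\lambda_{1},\mu_{1})$ and $(\lambda_{i},\mu_{i})$ are the solutions when $a_{0}=1$ and $a_{0}=i$, respectively; substituting into the expression for $Z_{m}^{(0)}$ gives the claimed decomposition.

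For the inductive step, assume both properties hold at degree $n-1$. Property~2 propagates immediately through (\ref{pre:10}) because the Bers integral
\[
\int_{\Gamma}Wd_{(F_{m},G_{m})}z=F_{m}\mathrm{Re}\int_{\Gamma}G_{m}^{*}Wdz+G_{m}\mathrm{Re}\int_{\Gamma}F_{m}^{*}Wdz
\]
is $\mathbb{R}$-linear in the integrand $W$ (the $\mathrm{Re}$ operators preclude full $\mathbb{C}$-linearity but preserve real combinations). Applying the inductive hypothesis to $Z_{m-1}^{(n-1)}(a_{n},\cdot;z_{0})=a'_{n}Z_{m-1}^{(n-1)}(1,\cdot;z_{0})+a''_{n}Z_{m-1}^{(n-1)}(i,\cdot;z_{0})$ and then pulling $a'_{n},a''_{n}$ through the integral yields $Z_{m}^{(n)}(a_{n},z;z_{0})=a'_{n}Z_{m}^{(n)}(1,z;z_{0})+a''_{n}Z_{m}^{(n)}(i,z;z_{0})$.

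For property~1 in the inductive step, I would invoke the foundational result of Bers (quoted from \cite{bers}, \cite{kpa}) that the $(F_{m},G_{m})$-antiderivative of a function belonging to the correctly-indexed adjacent generating pair in the sequence is itself $(F_{m},G_{m})$-pseudoanalytic; combined with the inductive assumption that $Z_{m-1}^{(n-1)}$ is pseudoanalytic with respect to its own pair, this yields $(F_{m},G_{m})$-pseudoanalyticity of $Z_{m}^{(n)}$. The main obstacle is precisely this step: one must verify that the successor/predecessor relation built into the generating sequence $\{(F_{m},G_{m})\}$ matches the shift in indices appearing in (\ref{pre:10}), so that the integrand sits in the right pseudoanalytic class for the Bers antiderivative to produce an $(F_{m},G_{m})$-pseudoanalytic function. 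Once that bookkeeping is settled, everything else is routine linearity.
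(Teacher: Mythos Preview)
The paper does not supply a proof of this theorem; it is stated as a classical fact inherited from \cite{bers} and \cite{kpa}, so there is no in-paper argument to compare against. Your induction on $n$ is exactly the standard route in the Bers literature and is correct in outline: the base case is immediate from Theorem~\ref{th:00} and the $\mathbb{R}$-linearity of the Vekua equation, and property~2 propagates through (\ref{pre:10}) by the $\mathbb{R}$-linearity of the $(F_m,G_m)$-integral, just as you say.

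Your one flagged obstacle is genuine and worth a closer look. The recursion (\ref{pre:10}) as written in the paper has $Z_{m-1}^{(n-1)}$ in the integrand under a $d_{(F_m,G_m)}$-integral. In the standard Bers setup the $(F_m,G_m)$-derivative of an $(F_m,G_m)$-pseudoanalytic function is $(F_{m+1},G_{m+1})$-pseudoanalytic, so the antiderivative runs the other way: one integrates an $(F_{m+1},G_{m+1})$-pseudoanalytic function to obtain an $(F_m,G_m)$-pseudoanalytic one. That means the integrand in (\ref{pre:10}) ought to carry the index $m+1$ rather than $m-1$ for your inductive step to close; as printed, the formula appears to have a sign slip in the subscript (compare \cite{kpa}). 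Once the indexing is reconciled with the successor relation (\ref{pre:08}), your invocation of the Bers antiderivative result is exactly what is needed, and the induction goes through.
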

\begin{theorem}
Let $W$ be an $\left(F_{m},G_{m}\right)$-pseudoanalytic function. Then it can be expressed by means of the so-called Taylor series in formal powers:
\begin{equation}
W=\sum_{n=0}^{\infty}Z_{m}^{(n)}\left(a_{n},z;z_{0}\right).
\label{pre:11}
\end{equation}
Since any $\left(F_{m},G_{m}\right)$-pseudoanalytic function $W$ accepts this expansion, this is an analytical representation for the general solution of the Vekua equation (\ref{pre:07}).  
\end{theorem}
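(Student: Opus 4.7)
The plan is to imitate the classical proof of Taylor's theorem for holomorphic functions, exploiting the structural analogy between the $(F_m,G_m)$-derivative on pseudoanalytic functions and the ordinary $\partial_z$ on analytic ones. The key structural fact to put in place first is that, if $W$ is $(F_m,G_m)$-pseudoanalytic, then $\partial_{(F_m,G_m)}W$ is $(F_{m+1},G_{m+1})$-pseudoanalytic; since by Theorem \ref{th:01} the generating pair $(F,G)$ with separable $p$ is embedded into a (periodic) generating sequence, iterated $(F,G)$-derivatives $W^{[n]}:=\partial_{(F_{m+n-1},G_{m+n-1})}\cdots\partial_{(F_m,G_m)}W$ are well defined within the domain of analyticity.

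Next I would establish the formal-power identity
\[
\partial_{(F_m,G_m)}Z_m^{(n)}(a_n,z;z_0)=n\,Z_{m+1}^{(n-1)}(a_n,z;z_0),\qquad n\geq 1,
\]
which is a direct consequence of the recursive definition (\ref{pre:10}) combined with (\ref{pre:09}) (the $(F_m,G_m)$-derivative inverts the $(F_m,G_m)$-antiderivative). Together with $Z_m^{(n)}(a_n,z_0;z_0)=0$ for $n\geq 1$ (obvious from the lower limit of the defining integral) and $Z_m^{(0)}(a_0,z_0;z_0)=a_0$, this forces the coefficients in any convergent expansion of the form (\ref{pre:11}) to be the \emph{formal Taylor coefficients} $a_n=W^{[n]}(z_0)/n!$, giving uniqueness immediately and suggesting the natural candidate for existence.

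For existence I would proceed in two stages. First, construct the partial sum $W_N(z)=\sum_{n=0}^{N}Z_m^{(n)}(a_n,z;z_0)$ with the $a_n$ defined above, and verify that $W-W_N$ is $(F_m,G_m)$-pseudoanalytic, vanishes to order $N+1$ at $z_0$ in the $(F,G)$-differential sense, and hence admits an integral representation in terms of the $(F_{m+N},G_{m+N})$-antiderivative of $W^{[N+1]}$ along a curve from $z_0$ to $z$. Second, bound this remainder uniformly on a closed subdisk $|z-z_0|\leq r$ contained in the domain: using the periodicity of the generating sequence supplied by Theorem \ref{th:01}, the characteristic coefficients in (\ref{pre:07}) are uniformly bounded there, so one gets exponential-type estimates $|Z_m^{(n)}(1,z;z_0)|\leq C\, r^{n}/n!$ of the same shape as for classical Taylor monomials.

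The main obstacle will be precisely this last convergence estimate. Directly iterating the integral in (\ref{pre:10}) would cost a factor per step that a priori destroys the $1/n!$ gain; the cleanest way around it, which I would adopt, is Bers's \emph{similarity principle}: write $W=e^{s}\,\widehat{W}$ locally, with $\widehat{W}$ holomorphic and $s$ Hölder-bounded, reducing convergence of the pseudoanalytic Taylor series to the convergence of the ordinary Taylor series of $\widehat{W}$, and then transfer the estimate back through the bounded factor $e^{s}$. Everything else in the argument is bookkeeping inside the periodic generating sequence of Theorem \ref{th:01}.
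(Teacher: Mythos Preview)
The paper does not supply a proof of this theorem. It is stated as a known result, attributed (together with the surrounding material) to Bers \cite{bers} and to Kravchenko \cite{kpa}; the paper simply quotes the statement and moves on. So there is no ``paper's own proof'' to compare your proposal against.

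That said, your outline is broadly in the spirit of Bers's original argument: define the higher $(F,G)$-derivatives through the generating sequence, identify the coefficients $a_n=W^{[n]}(z_0)/n!$ by differentiating the recursive definition (\ref{pre:10}), and then control the remainder. The one place where your plan is genuinely thin is the last step. Writing $W=e^{s}\widehat{W}$ with $\widehat{W}$ holomorphic via the similarity principle does give you a convergent ordinary Taylor series for $\widehat{W}$, but that does \emph{not} directly transfer to convergence of $\sum Z_m^{(n)}(a_n,z;z_0)$: the formal powers are not $e^{s}(z-z_0)^n$, and there is no termwise correspondence between the two expansions. What Bers actually uses is the asymptotic relation $Z_m^{(n)}(a,z;z_0)\sim a\,(z-z_0)^n$ as $z\to z_0$ together with uniform estimates of the type $|Z_m^{(n)}(a,z;z_0)|\leq C\,|a|\,|z-z_0|^n$ on compact subsets (obtained from an integral representation, not merely from iterating (\ref{pre:10})); the similarity principle enters to compare orders of vanishing, not to transplant series term by term. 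If you intend to carry this through, that remainder bound is the real work, and your current sketch does not yet contain it.
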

\section{The two-dimensional Electrical Impedance Equation}
Let us consider the two-dimensional case of the equation (\ref{int:00}): 
\[
\nabla\cdot\left(\sigma\nabla u\right)=0.
\]
As it has been shown in several previous works (see e.g. \cite{kpa} and \cite{oct}), if $\sigma$ can be expressed by means of a separable-variables function
\[
\sigma(x,y)=\sigma_{1}(x)\sigma_{2}(y),
\]
introducing the notations
\begin{eqnarray}
W=\sqrt{\sigma}\partial_{x}u-i\sqrt{\sigma}\partial_{y}u, \nonumber \\
p=\sqrt{\frac{\sigma_{2}}{\sigma_{1}}};
\label{eie:00}
\end{eqnarray}
the equation (\ref{int:00}) will turn into the Vekua equation
\begin{equation}
\partial_{\overline{z}}W-\frac{\partial_{\overline{z}}p}{p}\overline{W}=0,
\label{eie:01}
\end{equation}
for which the functions
\begin{equation}
F_{0}=p,\ \ G_{0}=\frac{i}{p},
\label{eie:02}
\end{equation}
conform a generating pair. From (\ref{eie:00}) it is easy to see that this pair is embedded into a generating sequence with period $k=2$, because $p$ is separable-variables, according to Theorem \ref{th:01}.

\subsection{A complete set for boundary value problems of the Electrical Impedance Equation}

The possession of an explicit generating sequence, allows the construction of the formal powers (\ref{pre:10}), so we can approach any solution for (\ref{eie:01}), which will be closely related with the solutions of (\ref{int:00}) according to the relations (\ref{eie:00}).

As a matter of fact, one special and very important relation between the solutions of (\ref{int:00}) and of (\ref{eie:01}) was elegantly posed in \cite{cck}, and this idea will play a central role in the present work.

\begin{theorem}
\cite{cck} Let us consider the set of formal powers
\[
\left\lbrace Z_{0}^{(n)}\left(1,z;0\right),\ Z_{0}^{(n)}\left(i,z;0\right) \right\rbrace_{n=0}^{\infty},
\]
corresponding to the generating pair (\ref{eie:02}), and let $\Omega\in\mathbb{R}^2$ be a bounded domain such that $0\in\Omega$, but $0\notin\Gamma$. Then the set of functions valued on $\Gamma$:
\begin{equation}
\left\lbrace \mbox{Re}Z_{0}^{(n)}\left(1,z;0\right)\vert_{\Gamma},\ \mbox{Re}Z_{0}^{(n)}\left(i,z;0\right)\vert_{\Gamma} \right\rbrace_{n=0}^{\infty},
\label{eie:03}
\end{equation}
conform a complete system for approaching solutions of the Dirichlet boundary value problem corresponding to (\ref{int:00}).
\end{theorem}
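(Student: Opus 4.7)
The plan is to combine the Taylor expansion theorem (\ref{pre:11}) for pseudoanalytic functions with the correspondence (\ref{eie:00}) between the electric potential $u$ and the pseudoanalytic function $W = \sqrt{\sigma}(\partial_x u - i \partial_y u)$ solving the Vekua equation (\ref{eie:01}). First I would apply the expansion theorem to $W$: choosing $z_0 = 0 \in \Omega$, any such $W$ admits a series representation $W = \sum_{n=0}^{\infty} Z_0^{(n)}(a_n, z; 0)$ with complex constants $a_n$. By property 2 of the formal-power theorem, each term splits as $a'_n Z_0^{(n)}(1, z; 0) + a''_n Z_0^{(n)}(i, z; 0)$ with $a'_n, a''_n \in \mathbb{R}$, so the expansion is naturally a real-linear combination of the two families of formal powers.

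Next, the expansion of $W$ has to be translated into an approximation of $u$ itself. The bridge is the pseudoanalytic $(F_0, G_0)$-antidifferentiation (\ref{pre:09}): the recursion (\ref{pre:10}) is precisely iterated antidifferentiation, so each $Z_0^{(n)}$ is already the $(F_0, G_0)$-antiderivative of $n \, Z_{-1}^{(n-1)}$. Applying this termwise to the expansion of $W$ yields a pseudoanalytic "potential" whose real part recovers $u$ up to an additive constant; taking real parts and invoking property 2 again expresses $u$ on any compactum of $\Omega$ as a uniform limit of real-linear combinations of $\mbox{Re}\,Z_0^{(n)}(1, z; 0)$ and $\mbox{Re}\,Z_0^{(n)}(i, z; 0)$. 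The hypothesis $0 \in \Omega$, $0 \notin \Gamma$ guarantees $\Gamma$ lies strictly inside the open set on which the uniform approximation is valid, so restricting to $\Gamma$ immediately delivers the completeness claim (\ref{eie:03}).

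I expect the principal obstacle to be the passage from the \emph{local} (disc-of-convergence) Taylor expansion guaranteed by (\ref{pre:11}) to \emph{global} uniform approximation on the possibly distant curve $\Gamma$. Overcoming this requires a Runge-type theorem for $(F_0, G_0)$-pseudoanalytic functions, asserting that any solution of (\ref{eie:01}) on $\Omega$ can be uniformly approximated on compacta by truncations of formal-power series centered at $0$. This is the core technical content of reference \cite{cck} and plays the same role that classical Runge approximation plays in proving density of harmonic polynomials on planar boundaries. A secondary subtlety is ensuring that the pseudoanalytic antiderivative producing $u$ from $W$ is single-valued throughout $\Omega$, which is exactly why the hypothesis places the center $0$ and the boundary $\Gamma$ inside the same connected bounded domain.
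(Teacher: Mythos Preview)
The paper itself does not prove this theorem: it is stated with the citation \cite{cck} and followed only by the explanatory remark rewriting completeness as the limit relation for $u\vert_{\Gamma}$. There is therefore no in-paper argument to compare your outline against; the proof resides entirely in the cited reference.

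Your sketch does name the correct structural ingredients---the Taylor expansion (\ref{pre:11}) together with a Runge-type density theorem promoting local convergence to uniform approximation on compacta, which you rightly identify as the substance of \cite{cck}. Where the outline is loose is the passage from $W$ back to $u$. In (\ref{eie:00}) the function $W=\sqrt{\sigma}(\partial_{x}u-i\partial_{y}u)$ encodes the \emph{gradient} of $u$, not $u$ itself, and recovering $u$ by a Bers antiderivative shifts the generating-sequence index: the antiderivative of an $(F_{0},G_{0})$-pseudoanalytic function is attached to the predecessor pair $(F_{-1},G_{-1})$, so the formal powers you would obtain are not the $Z_{0}^{(n)}$ appearing in the statement. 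The route actually taken in \cite{cck} avoids this detour by establishing directly that, for the main Vekua equation with pair $(p,i/p)$, the real parts of pseudoanalytic solutions are themselves solutions of the associated second-order divergence-form equation; hence each $\mbox{Re}\,Z_{0}^{(n)}$ is already a solution of (\ref{int:00}), and the Runge theorem applied on $\overline{\Omega}$ yields completeness on $\Gamma$ without any further integration.
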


The last statement implies that any boundary condition $u\vert_{\Gamma}$, imposed for the solutions $u$ of (\ref{int:00}), can be approached according to the expression:
\[
\lim_{N\rightarrow\infty}u\vert_{\Gamma}-\sum_{n=0}^{N}\left(a'\mbox{Re}Z_{0}^{(n)}\left(1,z;0\right)\vert_{\Gamma}+a''\mbox{Re} Z_{0}^{(n)}\left(i,z;0\right)\vert_{\Gamma}\right)=0,
\] 
where $a'$ and $a''$ are real constants.

Summarizing, when a separable-variables conductivity function $\sigma$ is given within a bounded domain $\Omega$, and a boundary condition $u\vert_{\Gamma}$ is imposed for the solution of (\ref{int:00}), it will be always possible to construct a finite set of functions, subset of (\ref{eie:03}), such that
\begin{equation}
\oint \left[ u\vert_{\Gamma}-\left(\sum_{n=0}^{N}a'\mbox{Re}Z_{0}^{(n)}\left(1,z;0\right)\vert_{\Gamma}+a''\ \mbox{Re}Z_{0}^{(n)}\left(i,z;0\right)\vert_{\Gamma}\right)\right]^{2}dl<\epsilon,
\label{eie:04}
\end{equation}
where $\epsilon>0$ and $l\in\Gamma$.

\subsection{Construction of a piecewise separable-variables function}

One of the main objectives of this work is to contribute into the construction of a new theory for the Electrical Impedance Tomography problem. Hence it is natural to search for the mathematical tools that will allow us to fully apply the modern Pseudoanalytic Function Theory into the analysis, e.g., of medical images.

This means that it is necessary to introduce interpolation methods that, given a set of conductivity values defined into a bounded domain on the plane, 
can reach separable-variables functions. One of the first approaches in this direction was posed in \cite{oct}, and it was properly analysed in \cite{ioprrh}. Indeed, the last reference is completely dedicated to prove the following assessment.

\begin{conjecture}
\label{con:00}
\cite{ioprrh} Let $\sigma$ be a function defined within a bounded domain $\Omega \in\mathbb{R}^{2}$, possessing discontinuities only of the first kind. Then it is possible to approach $\sigma$ by means of a piecewise separable-variables function of the form:
\begin{displaymath}
   \sigma_{pw} (x,y) = \left\{
     \begin{array}{lr}
       \frac{x+K_{1}}{\chi_{1}+K_{1}}\cdot f_{1}(y) & : x \in [x_{(0)},x_{(1)});\\
       \frac{x+K_{2}}{\chi_{2}+K_{2}}\cdot f_{2}(y) & : x \in [x_{(1)},x_{(2)});\\
       \cdots & \\
       \frac{x+K_{M}}{\chi_{M}+K_{M}}\cdot f_{M}(y) & : x \in [x_{(M-1)},x_{(M)}].
     \end{array}
   \right.   
\end{displaymath}
\begin{equation}
\label{eie:05}
\end{equation}
This separable-variables function can be employed for numerically constructing a finite set of formal powers of the form (\ref{eie:03}), in order to approach solutions for the Dirichlet boundary value problem of the two-dimensional Electrical Impedance Equation (\ref{int:00}), in the sense of (\ref{eie:04}), when a boundary condition $u\vert_{\Gamma}$ is imposed.
\end{conjecture}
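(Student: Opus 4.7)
The plan is to establish the conjecture in two stages: first an approximation-theoretic statement about the form (\ref{eie:05}), and second a constructive statement about using the formal-powers machinery strip by strip. For the approximation stage, I would partition the projection of $\Omega$ onto the $x$-axis as $x_{(0)}<x_{(1)}<\cdots<x_{(M)}$ and, on each vertical strip $S_{j}=\{(x,y)\in\Omega : x\in [x_{(j-1)},x_{(j)})\}$, choose a sample point $\chi_{j}\in[x_{(j-1)},x_{(j)})$ and set $f_{j}(y):=\sigma(\chi_{j},y)$. The factor $(x+K_{j})/(\chi_{j}+K_{j})$ is then the unique affine interpolant in $x$ that equals $1$ at $x=\chi_{j}$, and its free parameter $K_{j}$ is tuned (for instance by a least-squares fit of $\sigma(x,y)/\sigma(\chi_{j},y)$ over $S_{j}$) to best capture the $x$-variation inside the strip. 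Because $\sigma$ only has discontinuities of the first kind it is bounded and its one-sided limits exist at every jump line; by locating some of the $x_{(j)}$ on the vertical jump lines and refining the remaining intervals, the strip-wise sup error can be made arbitrarily small, whence an $L^{2}(\Omega)$ approximation bound follows by dominated convergence.

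For the second stage, observe that on each strip $S_{j}$ the restriction of $\sigma_{pw}$ is a genuine separable-variables function $\sigma_{1,j}(x)\sigma_{2,j}(y)$ with $\sigma_{1,j}(x)=(x+K_{j})/(\chi_{j}+K_{j})$ and $\sigma_{2,j}(y)=f_{j}(y)$. Hence Theorem \ref{th:01} applies directly to $p_{j}=\sqrt{\sigma_{2,j}/\sigma_{1,j}}$: the generating pair $(F_{0,j},G_{0,j})=(p_{j},\,i/p_{j})$ is embedded in a periodic generating sequence of period $k=2$, whose two generating pairs are written down explicitly. With this sequence in hand, the formal powers $Z_{0,j}^{(n)}(a_{n},z;z_{0})$ are obtained recursively via (\ref{pre:10}), each $(F_{m,j},G_{m,j})$-antiderivative being evaluated numerically on $S_{j}$. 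The completeness statement quoted immediately before the conjecture then guarantees that $\mbox{Re}\,Z_{0,j}^{(n)}(1,z;0)|_{\partial S_{j}}$ and $\mbox{Re}\,Z_{0,j}^{(n)}(i,z;0)|_{\partial S_{j}}$ form a complete system on $\partial S_{j}$ for the Dirichlet problem associated with $\sigma_{pw}|_{S_{j}}$.

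The final step is to glue these strip-wise approximations into a single approximation on $\Omega$. Given a boundary datum $u|_{\Gamma}$, I would fit a finite linear combination of the real parts on each boundary arc $\Gamma\cap \overline{S}_{j}$ in the sense of (\ref{eie:04}), and couple the coefficients across each internal interface $\{x=x_{(j)}\}$ via the transmission conditions (continuity of $u$ and of the normal current $\sigma\,\partial_{n}u$). As the partition is refined and $\sigma_{pw}\to\sigma$, the coupled Dirichlet problems converge to the original one by standard stability of linear elliptic boundary value problems under coefficient perturbations, and the combined formal-power expansion converges to the true solution in the mean-square sense of (\ref{eie:04}).

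The step I expect to be the main obstacle is precisely this gluing. The complete system of formal powers is centered at a single point $z_{0}=0$ and is tied to one global separable $p$, so exploiting (\ref{eie:05}) requires either (a) recentering and re-expanding on every strip while matching through the internal transmission conditions, which destroys the clean single-expansion picture and complicates the numerical coupling, or (b) proving that a sufficiently fine piecewise $\sigma_{pw}$ behaves, in the limit, like a single separable conductivity for the associated Vekua equation and its generating sequence. Neither route is routine; it is the absence of a rigorous treatment of this limit which keeps the assertion at the level of a conjecture rather than a theorem, and which the numerical experiments in the sequel are designed to support heuristically.
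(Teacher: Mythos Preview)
The paper does not prove this statement: it is explicitly labeled a \emph{conjecture}, and immediately after stating it the author writes that ``the postulate remains a conjecture because not any formal extension of the theorem posed in \cite{cck}, about the completeness of the set (\ref{eie:03}), is known for the case of piecewise separable-variables functions within bounded domains.'' The support offered in the paper is heuristic (the construction sketched around Figure~1) and numerical (the examples in \cite{ioprrh} and in Section~4), together with Proposition~\ref{pro:00}, which treats only the limiting regime $M,q\to\infty$. So there is no ``paper's own proof'' to compare against; you have correctly diagnosed, in your final paragraph, exactly the gap the author leaves open.

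That said, your sketch and the paper's heuristic take genuinely different routes, and the contrast is worth recording. You propose a strip-by-strip construction: build formal powers for each separable $\sigma_{pw}|_{S_j}$ separately, obtain completeness on each $\partial S_j$, and then glue across the interfaces $\{x=x_{(j)}\}$ via transmission conditions. The paper never attempts this gluing. Instead, via Proposition~\ref{pro:00}, it passes to the limit $M\to\infty$ so that $(x+K_j)/(\chi_j+K_j)\to 1$ on every strip; the piecewise conductivity then degenerates to a function of $y$ alone along each vertical line, and by the last sentence of Theorem~\ref{th:01} the associated generating sequence becomes globally periodic with period $k=1$. The formal powers in Section~4 are then computed \emph{globally} (along radial paths from a single center $z_0=0$), not strip-by-strip. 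Your approach would, if the transmission-matching could be made rigorous, yield a theorem for each fixed finite $M$; the paper's approach sidesteps the matching entirely but pays for it by working only in a limit whose analytic status is itself unclear. Either way, the missing ingredient is the same one you name: a completeness result for the formal-power system when $\sigma$ is merely piecewise separable, and neither you nor the paper supplies it.
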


A simplified illustration of the procedure for the construction of such piecewise function, is provided along in Figure 1, and a brief explanation of the central ideas will be provided now.
\begin{figure}
\centering
\subfigure[A circular domain sectioned.]{
\includegraphics[scale=0.350]{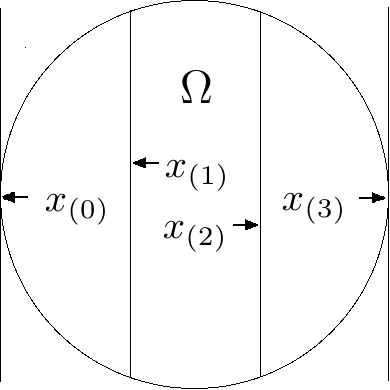}
\label{fig:eie00a}
}
\subfigure[Interpolating functions $f(y)$.]{
\includegraphics[scale=0.350]{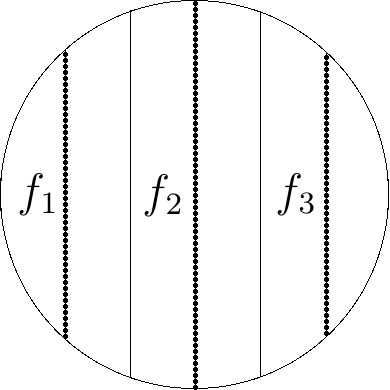}
\label{fig:eie00b}
}

\caption{Simplified illustrations of the steps for approaching piecewise separable-variables conductivity functions.}
\end{figure}

For simplicity, we will consider hereafter the domain $\Omega$ as the unitary circle, noticing that all postulates can be extended to a wide set of bounded domains, that are of special interest in many branches of Applied Mathematics, Physics, and Engineering.

The first step is to divide the domain $\Omega$ into a number $M$ of subsections, as shown in the Figure \ref{fig:eie00a}. Along a strait line crossing every subsection, parallel to the $y$-axis, we will collect a number $q$ of conductivity values, in order to introduce an interpolating function $f$ depending only upon the spatial variable $y$. We will posses then a set of $M$ interpolating functions, one for every subsection, as shown in Figure \ref{fig:eie00b}.

Finally, we state that the conductivity function in every subregion will have the separable-variables form
\[
\sigma=\frac{x+K_{j}}{\chi_{j}+K_{j}}\cdot f_{j}(y),
\]
where $j$ is the number of the subregion, $\chi_{j}$ is the common $x$-coordinate of all points collected in the $j$-subregion, and $K_{j}$ is a positive real constant such that $x+K_{j}\neq 0,\ x\in\Omega$.

The postulate remains a conjecture because not any formal extension of the theorem posed in \cite{cck}, about the completeness of the set (\ref{eie:03}), is known for the case of piecewise separable-variables functions within bounded domains.

Yet, the full set of examples presented in \cite{ioprrh} shows that the representation (\ref{eie:05}) is useful for solving the Dirichlet boundary value problem of (\ref{int:00}), since the numerical calculations succeed to approach the boundary conditions provided by known exact solutions, for a variety of conductivities that were both separable-variables and non separable-variables by definition.

This property will be now extended for analysing conductivities upcoming from geometrical distributions, precisely as those employed in physical measurements \cite{webster}. To achieve this, we will consider a limit case of the Conjecture \ref{con:00} that will result specially useful for these classes of conductivities. In behalf of simplicity, $\Omega$ will be considered as the unitary circle, but it is possible to verify that the next statements can be generalized for a wide variety of bounded domains.

\begin{proposition}
\label{pro:00}
Every conductivity function $\sigma$, defined within a bounded domain $\Omega\in\mathbb{R}^{2}$, and possessing only discontinuities of the first kind, can be considered the limit case of a piecewise separable-variables conductivity function $\sigma_{pw}$ of the form (\ref{eie:05}), when the number $M$ of subsections, and the number $q$ of collected values at every subsection, tend to infinity. This is
\begin{equation}
\sigma(x,y)=\lim_{M,q\rightarrow\infty}\sigma_{pw}(x,y).
\label{eie:06}
\end{equation}
Moreover, since
\[
\lim_{M\rightarrow\infty}\frac{x+K_{j}}{\chi_{j}+K_{j}}=1,
\]
where $j$ is the number of the subsection, it follows from Theorem \ref{th:01}, that the generating sequence corresponding to this limit case, employed for numerically approaching the formal powers, will be periodic with period $k=1$. 
\end{proposition}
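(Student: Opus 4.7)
The plan is to split the proposition into its two assertions: the pointwise convergence of $\sigma_{pw}$ to $\sigma$, and the consequence for the periodicity of the embedding generating sequence.

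First I would fix a point $(x,y)\in\Omega$ at which $\sigma$ is continuous; at the (at most countable) first-kind discontinuity points one argues analogously with one-sided limits. For any partition $x_{(0)}<x_{(1)}<\cdots<x_{(M)}$ of the $x$-projection of $\Omega$, the point $x$ belongs to some subsection $[x_{(j-1)},x_{(j)})$, and by definition
\[
\sigma_{pw}(x,y)=\frac{x+K_{j}}{\chi_{j}+K_{j}}\,f_{j}(y).
\]
As $M\to\infty$ with the mesh width going to zero, $x$ and $\chi_{j}$ coalesce, so $\frac{x+K_{j}}{\chi_{j}+K_{j}}\to 1$ uniformly in $j$; here the standing assumption $x+K_{j}\neq 0$ for all $x\in\Omega$ keeps the denominator bounded away from zero, so the ratio is well behaved. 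Next, since $f_{j}$ is an interpolant of $\sigma(\chi_{j},\cdot)$ through $q$ sample points in $y$, letting $q\to\infty$ in any convergent interpolation scheme (piecewise linear, cubic spline, or polynomial of Weierstrass type) gives $f_{j}(y)\to\sigma(\chi_{j},y)$ at every point of continuity of $\sigma$ in $y$; the piecewise-continuity hypothesis is exactly what justifies this. Combining the two limits,
\[
\lim_{M,q\to\infty}\sigma_{pw}(x,y)=1\cdot\sigma(x,y)=\sigma(x,y),
\]
which is (\ref{eie:06}).

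For the second assertion, I would apply Theorem \ref{th:01} subsection-by-subsection. In subsection $j$ we identify $\sigma_{1}(x)=\frac{x+K_{j}}{\chi_{j}+K_{j}}$ and $\sigma_{2}(y)=f_{j}(y)$, so
\[
p=\sqrt{\sigma_{2}/\sigma_{1}}=\sqrt{\frac{(\chi_{j}+K_{j})\,f_{j}(y)}{x+K_{j}}}.
\]
Passing $M\to\infty$ forces the $x$-dependent factor $\sqrt{(\chi_{j}+K_{j})/(x+K_{j})}\to 1$, so the limiting $p$ reduces to the purely $y$-dependent function $\sqrt{f_{j}(y)}$. This is precisely the special case $p_{1}(x)\equiv 1$ of Theorem \ref{th:01}, which then guarantees that the generating pair $(F_{0},G_{0})=(p,i/p)$ is embedded into a generating sequence of period $k=1$, as claimed.

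The main obstacle is making the interpolation limit $f_{j}\to\sigma(\chi_{j},\cdot)$ fully rigorous without committing to a specific interpolation rule; this reduces to invoking a standard approximation theorem of Weierstrass type on each $y$-subinterval between jumps, together with a careful accounting at the finitely many jump points in any compact slice. A subtler issue is the order of the two limits: the statement is pointwise, so an iterated limit (first $q\to\infty$, then $M\to\infty$, or the reverse) suffices and matches the informal reading of (\ref{eie:06}); a \emph{joint} limit would require a diagonal-selection argument that the proposition as stated does not demand.
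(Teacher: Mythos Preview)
Your proposal is correct and follows essentially the same route as the paper: show that the $x$-factor $\frac{x+K_{j}}{\chi_{j}+K_{j}}\to 1$ as the strip width shrinks, argue that the $y$-interpolant $f_{j}$ recovers $\sigma$ along the vertical slice, and then invoke the special case $p_{1}(x)\equiv 1$ of Theorem~\ref{th:01} to obtain period $k=1$. If anything, you are more scrupulous than the paper itself---your remarks on continuity points, the need for a convergent interpolation scheme, and the order of the double limit $(M,q)\to\infty$ address issues the paper's argument leaves implicit.
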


\begin{proof}
Let us consider that $\Omega$ has been sectioned into $M$ subdomains $\left\lbrace\Omega_{j}\right\rbrace_{j=1}^{M}$, by employing the set of equidistant $y$-axis parallel lines
\[
\left\lbrace \xi_{j}=\ x_{(j)}:x_{(j)}=-1+\frac{2j}{M} ;\ j=0,1,...,M\right\rbrace,
\]
Since $\Omega$ coincides with the unitary circle, it can be described by the union of $M$ subdomains, defined according to the expressions
\begin{eqnarray}
\Omega_{j+1}&=&\left\lbrace (x,y)\ : x\in [x_{(j)},x_{(j+1)}),\ x^{2}+y^{2}\leq 1\right\rbrace,\ j=0,1,2,...,M-2;\nonumber \\
\Omega_{M}&=&\left\lbrace (x,y)\ : x\in [x_{(M-1)},x_{(M)}],\ x^{2}+y^{2}\leq 1\right\rbrace. \nonumber
\end{eqnarray}
Let us consider also the set of lines
\[
\left\lbrace \varphi_{j+1}=\chi_{j+1}:\chi_{j+1}=x_{(j)}+\frac{x_{(j)}+x_{(j+1)}}{2};\ j=0,1,...,M-1\right\rbrace.
\]
It is clear that
\[
\lim_{M\rightarrow\infty}\vert x_{(j)}-x_{(j+1)}\vert=0,\ j=0,1,...,M;
\]
and in consequence, for $\forall x\in\Omega_{j};\ j=1,2,...,M$; we will have that
\[
x\rightarrow\chi_{j}.
\]
It immediately follows that 
\begin{equation}
\lim_{M\rightarrow\infty}\frac{x+K_{j}}{\chi_{j}+K_{j}}=1.
\label{eie:07}
\end{equation}
This implies that every $\Omega_{j}$ will be confined into a segment of a parallel line to the $y$-axis. 

Indeed, when the $M\rightarrow\infty$, $\Omega$ can be represented by the set of the line segments
\begin{equation}
\left\lbrace \varphi=C: -1<C<1,\ x^{2}+y^{2}<1\right\rbrace.
\label{eie:08}
\end{equation}

Beside, since the number $q$ of conductivity values, collected along the line $\varphi_{j+1}$ within every $\Omega_{j}$, also tends to infinite, not any interpolation method will be required for obtaining the $f_{j}(y)$ functions. They will simply coincide with the original values of the function $\sigma$ over the elements of the set (\ref{eie:08}).

Finally, from (\ref{eie:07}), it follows that
\[
\lim_{M\rightarrow\infty}\frac{x+K_{j}}{\chi_{j}+K_{j}}\cdot f_{j}(y)=f_{j}(y),
\]
which, according to the last sentence of the Theorem \ref{th:01}, will provoke that the corresponding generating sequence, employed to numerically approach some of the formal powers (\ref{eie:03}), will be periodic with period $k=1$.
\end{proof}

\section{Numerical solutions for the two-dimensional Electrical Impedance Equation}
We now analyse a selected set of analytic conductivity functions for which exact solutions are known, in order to examine the effectiveness of the method posed in Proposition \ref{pro:00}. Thereafter, we will consider conductivity distributions upcoming from geometrical distributions, whose analytical representation is, in general, unknown, imposing certain boundary conditions that will help us to appreciate the behaviour of the technique in this special and important cases.

A detailed description of the numerical methods used to approach the solution, that fulfil the boundary condition of every example, can be found in \cite{bucio}. Since we are considering the unitary circle, and taking into account the validity of the expression (\ref{eie:04}), that as a matter of fact is a Lebesgue integral-type operator, let us introduce an inner product for the elements of the finite set

\begin{equation}
\left\lbrace \mbox{Re}Z_{0}^{(n)}\left(1,z;0\right)\vert_{\Gamma},\ \mbox{Re}Z_{0}^{(n)}\left(i,z;0\right)\vert_{\Gamma} \right\rbrace_{n=0}^{N},
\label{ns:00}
\end{equation}
according to the formula
\[
\left\langle \mbox{Re}Z_{0}^{(n_{1})}\vert_{\Gamma},\mbox{Re}Z_{0}^{(n_{2})}\vert_{\Gamma}\right\rangle= \oint \mbox{Re}Z_{0}^{(n_{1})}(l)\cdot\mbox{Re}Z_{0}^{(n_{2})}(l)dl,
\]
where $l\in\Gamma$, and $n_{1},n_{2}=0,1,2,...$ It follows that we can obtain a set of $2N+1$ orthonormal functions $\left\lbrace u_{\alpha}\right\rbrace_{\alpha=0}^{2N-1}$, such that we can approach an imposed boundary condition $u\vert_{\Gamma}$ according to the expression
\[
u\vert_{\Gamma}\sim\sum_{\alpha=0}^{2N+1}b_{\alpha}u_{\alpha},
\]
where $b_{\alpha}$ are real constant coefficients. Notice that the apparent lose of one base function is because, by virtue of the Definition \ref{def:00}, we have
\[
Z_{0}^{(0)}(i,0;z)=\frac{i}{p},
\]
thus $\mbox{Re}Z_{0}^{(0)}(i,0;z)=0$.
It should be also noticed that the orthonormalization procedure has been performed considering, first, the subset of $N+1$ functions
\begin{equation}
\left\lbrace \mbox{Re}Z_{0}^{(n)}(1,0;z)\right\rbrace_{n=0}^{N},
\end{equation}
\label{ns:04}
followed by the subset of $N$ functions
\begin{equation}
\left\lbrace \mbox{Re}Z_{0}^{(n)}(i,0;z)\right\rbrace_{n=1}^{N}.
\label{ns:05}
\end{equation}
This remark is important for adequately examining the illustrations where the absolute values of the coefficients $b_{\alpha}$ are displayed.

\subsection{The sinusoidal case.}
\begin{proposition}
Let
\begin{equation}
\sigma=(2+\cos{\omega x})(2+\sin{\omega y}).
\label{ns:01}
\end{equation}
Then the function
\begin{equation}
u=\frac{2}{\sqrt{3}}\arctan{\left(\frac{\tan{\frac{\omega x}{2}}}{\sqrt{3}}\right)}+\frac{2}{\sqrt{3}}\arctan\left(\frac{1+2\tan\frac{\omega y}{2}}{\sqrt{3}} \right);
\label{ns:02}
\end{equation}
is a particular solution of the Electrical Impedance Equation (\ref{int:00}).
\end{proposition}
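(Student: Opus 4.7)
The plan is to exploit two structural features of the claim: the conductivity $\sigma$ is \emph{multiplicatively} separable, $\sigma(x,y)=\sigma_1(x)\sigma_2(y)$ with $\sigma_1(x)=2+\cos\omega x$ and $\sigma_2(y)=2+\sin\omega y$, while the candidate solution $u$ is \emph{additively} separable, $u(x,y)=U(x)+V(y)$. Under this structure, the equation $\nabla\cdot(\sigma\nabla u)=0$ expands to
\[
\sigma_2(y)\,\frac{d}{dx}\bigl[\sigma_1(x)U'(x)\bigr]+\sigma_1(x)\,\frac{d}{dy}\bigl[\sigma_2(y)V'(y)\bigr]=0,
\]
so after dividing by $\sigma_1\sigma_2$ the two summands are functions of a single variable each. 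The strongest way to conclude is thus to show that each of $\sigma_1(x)U'(x)$ and $\sigma_2(y)V'(y)$ is in fact \emph{constant}; this would kill both derivatives and the equation would be satisfied trivially.

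To verify this I would compute $U'$ and $V'$ by differentiating the arctangents directly, and then introduce the tangent half–angle variables $t=\tan(\omega x/2)$ and $s=\tan(\omega y/2)$. For the $x$-factor one gets, after a short simplification using $\sec^2(\omega x/2)=1+t^2$,
\[
U'(x)=\frac{\omega(1+t^2)}{3+t^2},
\]
while the identity $\cos\omega x=(1-t^2)/(1+t^2)$ gives $\sigma_1(x)=(3+t^2)/(1+t^2)$; multiplying the two yields the desired constant $\sigma_1(x)U'(x)=\omega$.

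The $y$-factor is the same idea but slightly more delicate because the arctangent argument $(1+2s)/\sqrt{3}$ is no longer centered. Using $\sin\omega y=2s/(1+s^2)$ one obtains $\sigma_2(y)=2(1+s+s^2)/(1+s^2)$, and the chain rule together with $1+(1+2s)^2/3=4(1+s+s^2)/3$ produces
\[
V'(y)=\frac{\omega(1+s^2)}{2(1+s+s^2)},
\]
so again $\sigma_2(y)V'(y)=\omega$. Both factors being constant, both $x$- and $y$-derivatives in the expanded divergence vanish, and the equation holds identically on the domain where $\sigma_1,\sigma_2>0$.

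The calculation is elementary throughout; the only place a careful bookkeeping is needed is the half–angle simplification of the $V$ factor, because the shift by $1$ inside the arctangent makes the algebra less symmetric than in the $U$ case. Beyond that, the proof is purely a matter of recognizing the additive/multiplicative separability and carrying out the two one-variable verifications in parallel.
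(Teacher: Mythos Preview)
Your argument is correct. The separability observation reduces the divergence identity to the two one-variable claims $\sigma_1 U'=\mathrm{const}$ and $\sigma_2 V'=\mathrm{const}$, and your half-angle computations verify both (indeed, both constants equal $\omega$). The algebra checks out line by line, including the slightly asymmetric $V$ case where $3+(1+2s)^2=4(1+s+s^2)$ is the key simplification.

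As for comparison with the paper: there is nothing to compare. The paper states this proposition without proof and immediately proceeds to use $u$ as a benchmark boundary condition for the numerical experiments. Your verification therefore supplies what the paper omits, and the route you chose---recognizing that $u=\int\frac{dx}{\sigma_1}+\int\frac{dy}{\sigma_2}$ so that $\sigma_1 U'$ and $\sigma_2 V'$ are separately constant---is the natural and most economical one.
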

This case was selected because of the variations of conductivity that take place within the unitary circle. Unfortunately, for avoiding the indetermination of the tangent functions contained into the particular solution (\ref{ns:02}), we could only consider the case $\omega\rightarrow\pi$. An illustration of this conductivity is given in the Figure 2.
\begin{figure}
\centering
\includegraphics[scale=0.30]{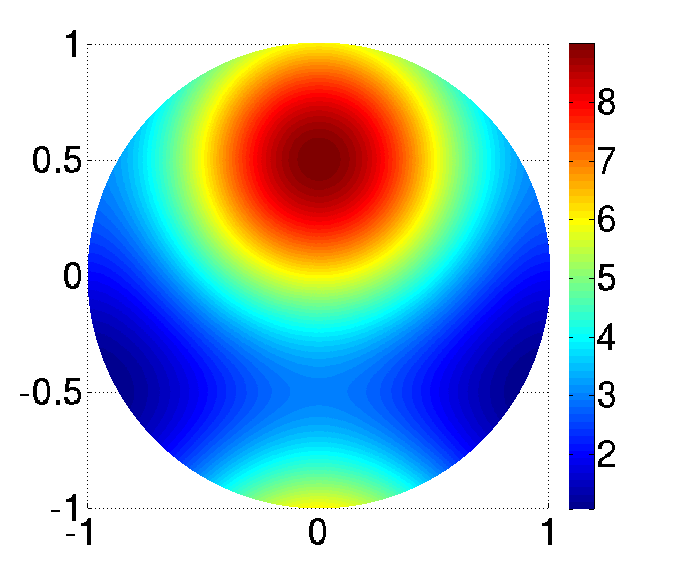}
\label{fig:ns00}
\caption{$\sigma=(2+\cos\pi x)(2+\sin\pi y)$ within the unitary circle.}
\end{figure}

The number of points located around $\Gamma$, when imposing the boundary condition, does not necessarily has to coincide with the number $\alpha$ of orthonormal functions, as it is shown in \cite{ckr}, where the collocation method is employed. Nevertheless, on behalf of simplicity, this work will consider $35$ equally distributed points around the perimeter of the unitary circle, and $\alpha=35$ base functions $u_{\alpha}$. The boundary condition will be obtained by evaluating the solution $u$, presented in (\ref{ns:02}), over this set of points.

The absolute error $\mathcal{E}$ is defined as the classical Lebesgue norm
\begin{equation}
\mathcal{E}=\left(\oint \left(u(l)-\sum_{\alpha=0}^{2N+1}b_{\alpha}u_{\alpha}(l) \right)^{2}dl\right)^{\frac{1}{2}};
\label{ns:03}
\end{equation}
where $u(l)$ represents the solution (\ref{ns:02}) valued on the boundary $\Gamma$, and $N=17$. The result of this integral will be approached using the standard trapezoidal method over $1000$ equally distributed points on the segment $[0,2\pi]$.

\begin{figure}
\centering
\subfigure[Absolute values of the 35 coefficients employed for approaching the boundary condition.]{
\includegraphics[scale=0.2]{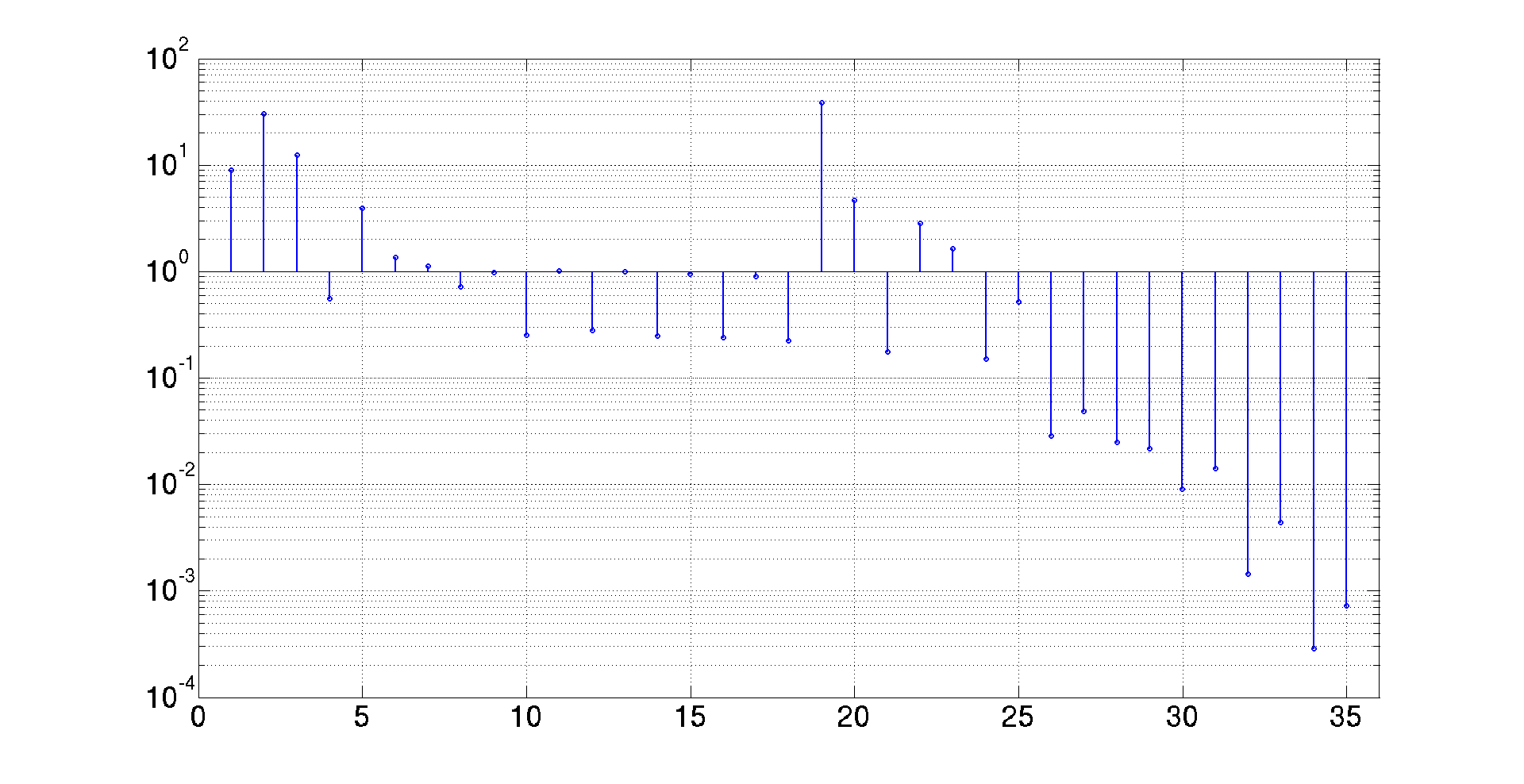}
\label{fig:ns01a}
}
\subfigure[Comparison between the boundary condition and the approached solution.]{
\includegraphics[scale=0.2]{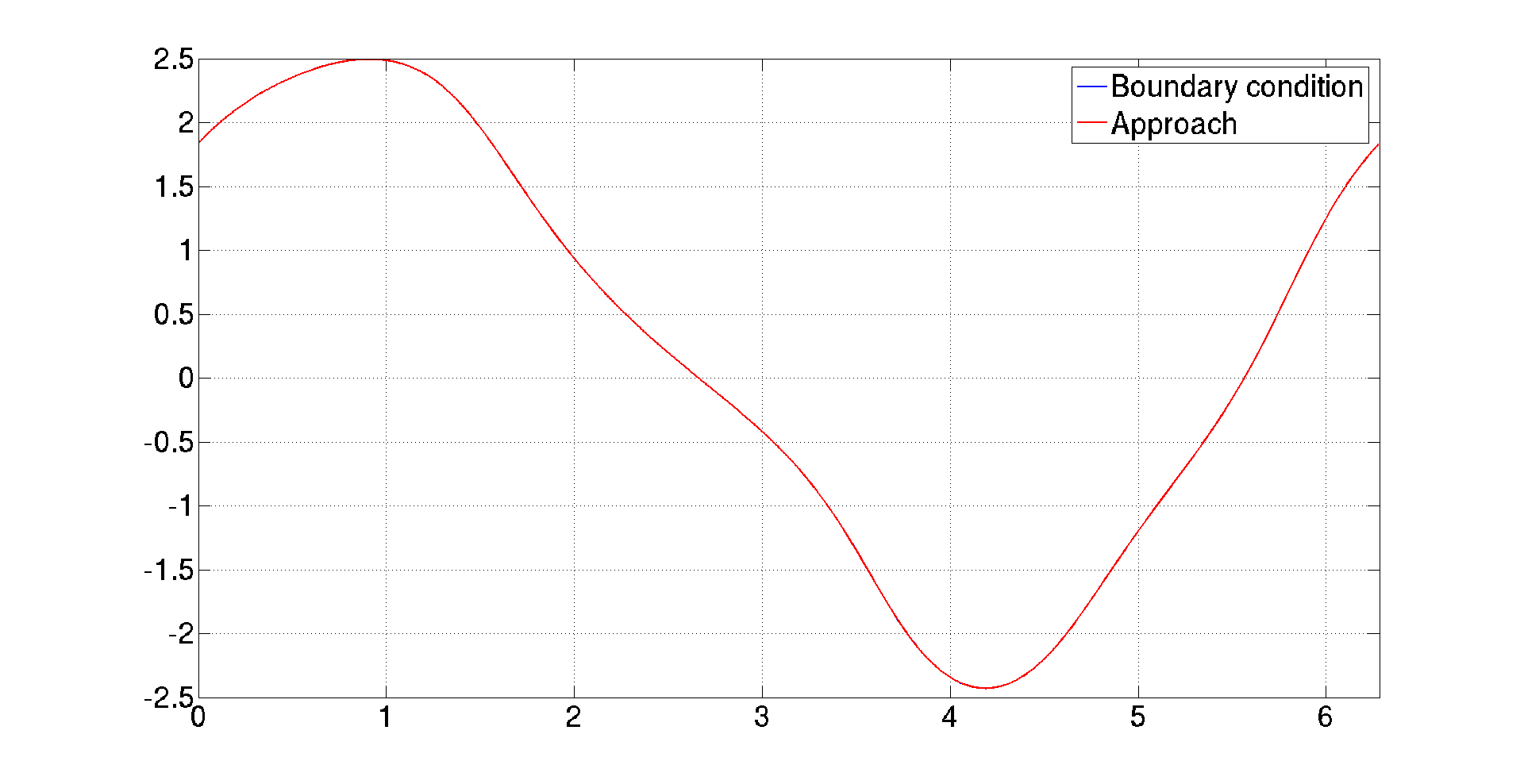}
\label{fig:ns01b}
}
\caption{Illustrations for the case $\sigma=(2+\cos\pi x)(2+\sin\pi y)$.}
\end{figure}

The Figure \ref{fig:ns01a} displays a logarithmic plot of the absolute magnitudes corresponding to every coefficient $b_{\alpha}$ used for approaching the boundary condition. We shall remember that the first $18$ coefficients correspond to orthonormal system obtained from the set (\ref{ns:04}), whereas the remaining $17$ correspond to the set (\ref{ns:05}). The Figure \ref{fig:ns01b} displays a comparison between the imposed boundary condition (\ref{ns:02}) valued on $\Gamma$, drew in blue, and the approached solution, in red. It is not possible to detect any difference at first sight in this case. Beside, the obtained error $\mathcal{E}=0.155\times 10^{-3}$, thus it is possible to assess the approach is adequate.

The Table \ref{tab:ns00} show the numerical values of some of the most relevant coefficients $b_{\alpha}$ used in the reconstruction. It is important to pay attention to the number $\alpha$ selected, since, in general, they are not presented in consecutive order.

\begin{table}
\caption{\label{tab:ns00}Values of the coefficients $b_{\alpha}$ corresponding to the boundary value problem with $\sigma=(2+\cos\pi x)(2+\sin\pi y)$.}
\centering
\begin{tabular}{| c | c | c | c | c | c | c | c |}\hline

$b_{0}$ & $b_{1}$ & $b_{2}$ & $b_{4}$ & $b_{19}$ & $b_{20}$ & $b_{22}$ & $b_{23}$\\ \hline

9.022 & 30.180 & 12.446 & 3.914 & -38.846 & 4.632 & 2.820 & 1.627 \\ \hline

\end{tabular}
\end{table}

\subsection{The Lorentzian cases.}

\begin{proposition}
\cite{ioprrh} Let the conductivity function be
\begin{equation}
\sigma=\left(\frac{1}{\left(x-\beta\right)^2+0.1}\right)\left(\frac{1}{ y^2+0.1}\right).
\label{ns:06}
\end{equation}
Thus the function
\begin{equation}
u=\frac{\left(x-\beta\right)^3+y^3}{3}+0.1\left(x-\beta +y\right),
\label{ns:07}
\end{equation}
is a solution of (\ref{int:00}).
\end{proposition}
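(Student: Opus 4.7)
The plan is a direct verification: compute $\sigma\nabla u$ explicitly, observe the cancellations afforded by the special form of $u$, and show that each component of $\sigma\nabla u$ depends on only one spatial variable, so that $\nabla\cdot(\sigma\nabla u)$ vanishes term by term.

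First I would differentiate the candidate potential (\ref{ns:07}). A direct calculation yields
\[
\partial_{x}u=(x-\beta)^{2}+0.1,\qquad \partial_{y}u=y^{2}+0.1,
\]
since the cubic and linear contributions in $x$ produce exactly $(x-\beta)^{2}$ and $0.1$ respectively, and similarly for the $y$ variable. The design of $u$ is such that each partial derivative reproduces, up to the offset $\beta$, one of the two factors appearing in the denominator of $\sigma$ in (\ref{ns:06}).

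Second, I would multiply by $\sigma$ and exploit the cancellations. Using the separable form of $\sigma$, one gets
\[
\sigma\,\partial_{x}u=\frac{(x-\beta)^{2}+0.1}{\bigl((x-\beta)^{2}+0.1\bigr)\bigl(y^{2}+0.1\bigr)}=\frac{1}{y^{2}+0.1},
\]
\[
\sigma\,\partial_{y}u=\frac{y^{2}+0.1}{\bigl((x-\beta)^{2}+0.1\bigr)\bigl(y^{2}+0.1\bigr)}=\frac{1}{(x-\beta)^{2}+0.1}.
\]
Thus the flux $\sigma\nabla u$ has the remarkable property that its $x$-component is a function of $y$ alone and its $y$-component a function of $x$ alone.

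Finally I would take the divergence. Because $\partial_{x}(\sigma\partial_{x}u)=\partial_{x}\bigl(1/(y^{2}+0.1)\bigr)=0$ and $\partial_{y}(\sigma\partial_{y}u)=\partial_{y}\bigl(1/((x-\beta)^{2}+0.1)\bigr)=0$, we obtain $\nabla\cdot(\sigma\nabla u)=0$ identically in $\mathbb{R}^{2}$, which establishes the claim. There is no real obstacle here: the whole proof is the observation that the cubic-plus-linear form of $u$ has been engineered precisely so that $\partial_{x}u$ and $\partial_{y}u$ cancel the two Lorentzian factors of $\sigma$, leaving a divergence-free flux. The translation parameter $\beta$ plays no role in the verification beyond shifting the $x$-dependence.
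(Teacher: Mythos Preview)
Your verification is correct in every detail: the partial derivatives of $u$ are precisely $(x-\beta)^{2}+0.1$ and $y^{2}+0.1$, these cancel the two Lorentzian factors of $\sigma$, and the resulting flux has components depending on the ``wrong'' variable, so its divergence vanishes identically.

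The paper itself does not supply a proof of this proposition; it is simply stated with a citation to \cite{ioprrh}. Your direct computation is exactly the natural (and essentially unique) way to establish the claim, so there is nothing to compare against.
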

Three cases will be considered: $\beta=0,\ 0.5,\ 1$. The Figures \ref{fig:ns02a}, \ref{fig:ns02b} and \ref{fig:ns02c} display each example.

\begin{figure}
\centering
\subfigure[Case when $\beta=0$.]{
\includegraphics[scale=0.25]{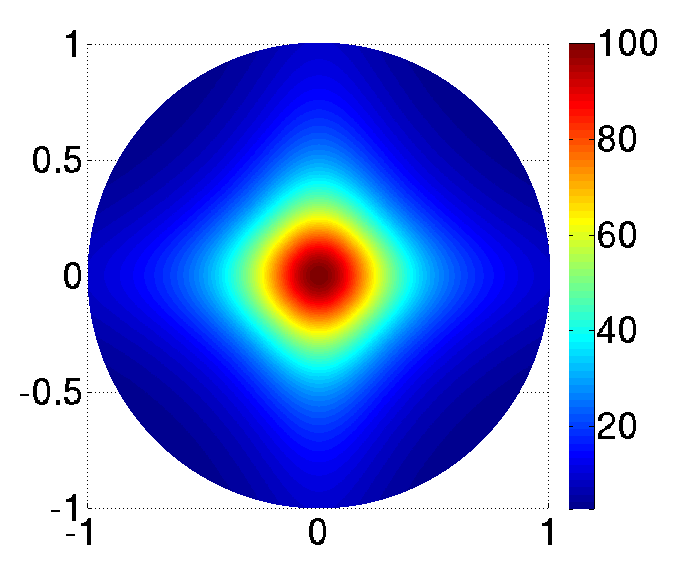}
\label{fig:ns02a}
}
\subfigure[Case when $\beta=0.5$.]{
\includegraphics[scale=0.25]{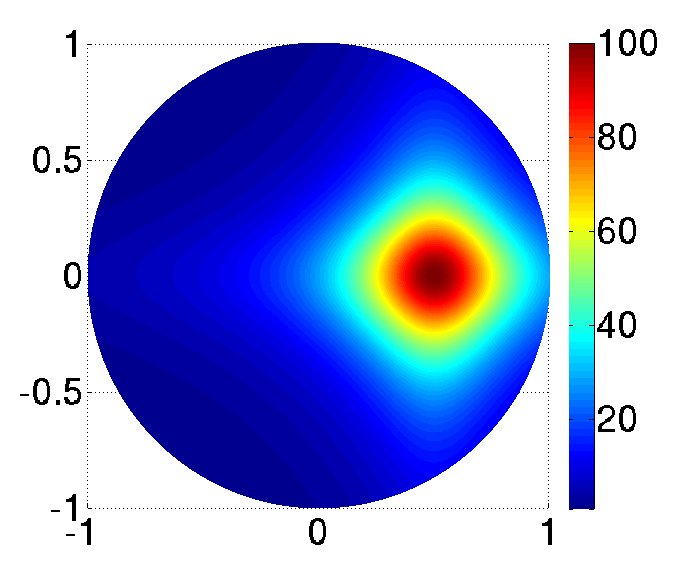}
\label{fig:ns02b}
}
\subfigure[Case when $\beta=1$.]{
\includegraphics[scale=0.25]{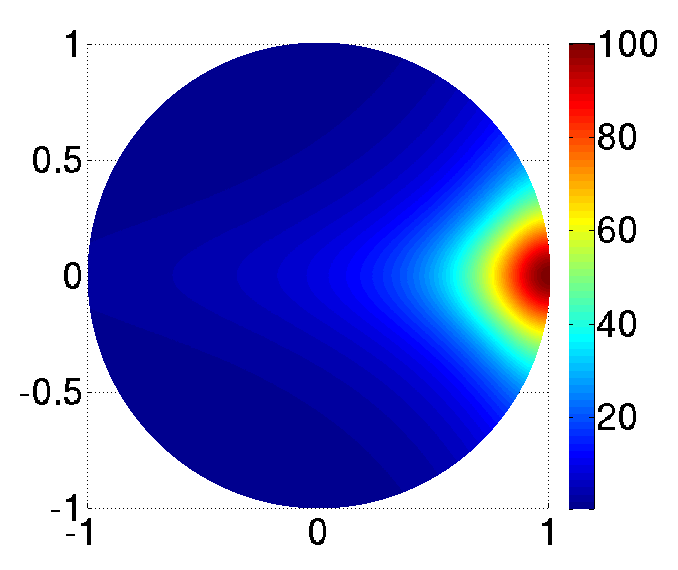}
\label{fig:ns02c}
}
\caption{Illustrations for the case $\sigma=((x-\beta)^{2}+0.1)^{-1}(y^{2}+0.1)^{-1}$.}
\end{figure}

Following the same logical steps of the previous subsection, the Figures \ref{fig:ns03a}, \ref{fig:ns03b} and \ref{fig:ns03c}, show logarithmic plots of the absolute values of the coefficients $b_{\alpha}$ corresponding to the cases $\beta=0,\ 0.5,\ 1$; respectively. The illustrations with the comparisons between the boundary conditions obtained from (\ref{ns:07}) and the approached solutions were omitted, because once more, it was not possible to notice any difference between the pairs of curves.

The absolute error for the case $\beta=0$ was $\mathcal{E}=0.592\times 10^{-3}$, the one corresponding to $\beta=0.5$ was $\mathcal{E}=1.4\times 10^{-3}$, and such belonging to $\beta=1$ was $\mathcal{E}=2.9\times 10^{-3}$. 

\begin{figure}
\centering
\subfigure[Absolute values of the 35 coefficients employed for approaching the boundary condition when $\beta=0$.]{
\includegraphics[scale=0.2]{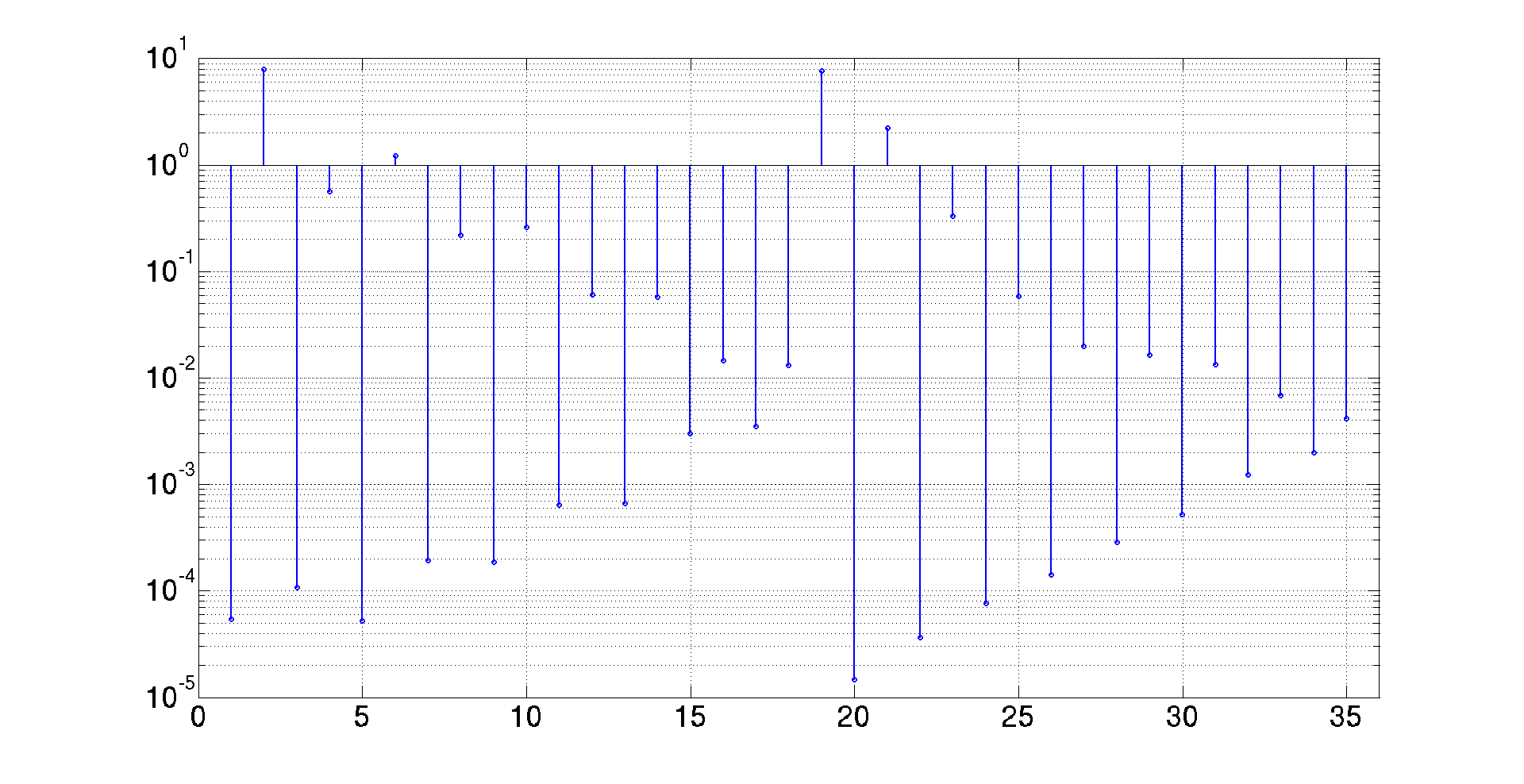}
\label{fig:ns03a}
}
\subfigure[Absolute values of the 35 coefficients employed for approaching the boundary condition when $\beta=0.5$.]{
\includegraphics[scale=0.2]{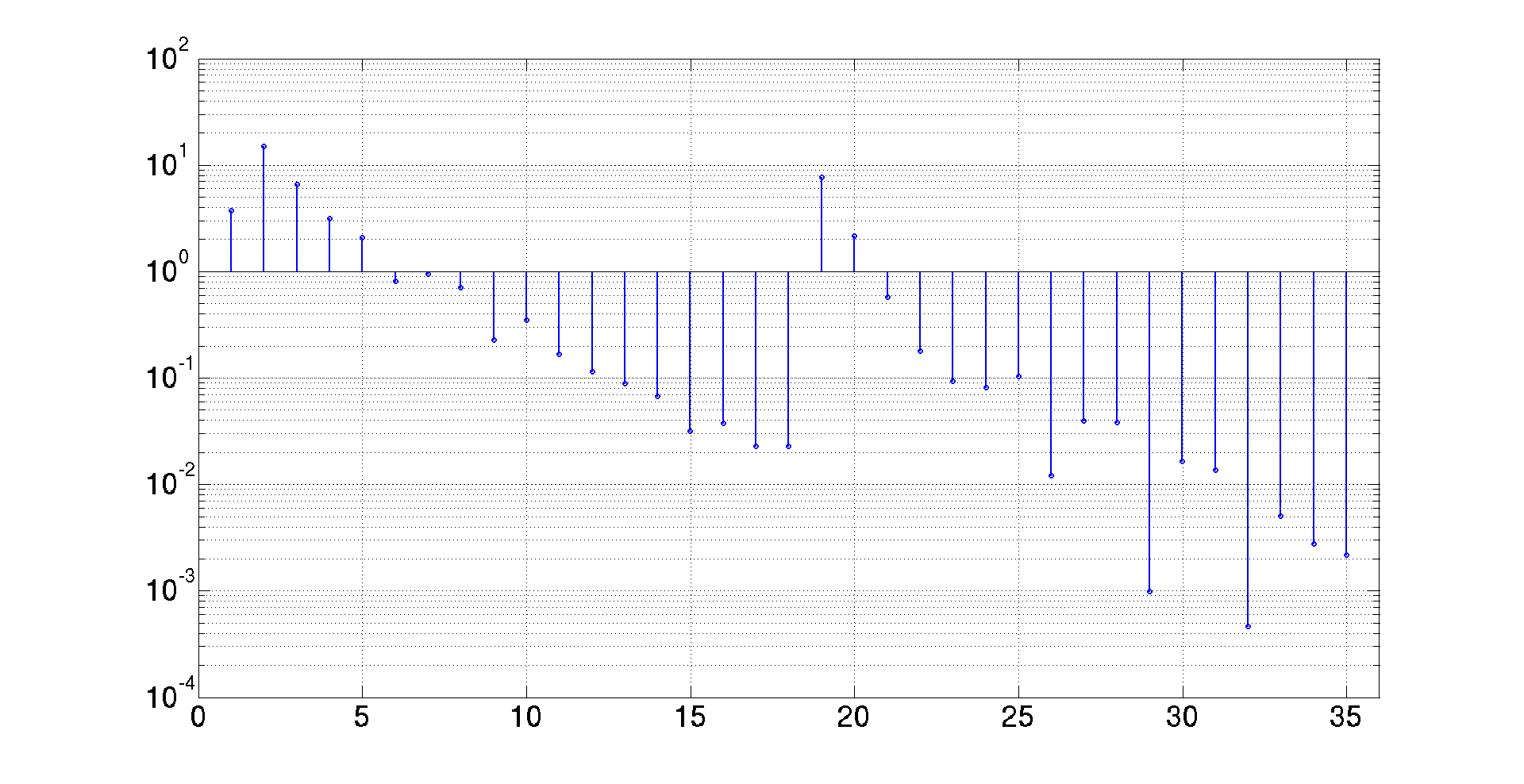}
\label{fig:ns03b}
}
\subfigure[Absolute values of the 35 coefficients employed for approaching the boundary condition when $\beta=1$.]{
\includegraphics[scale=0.2]{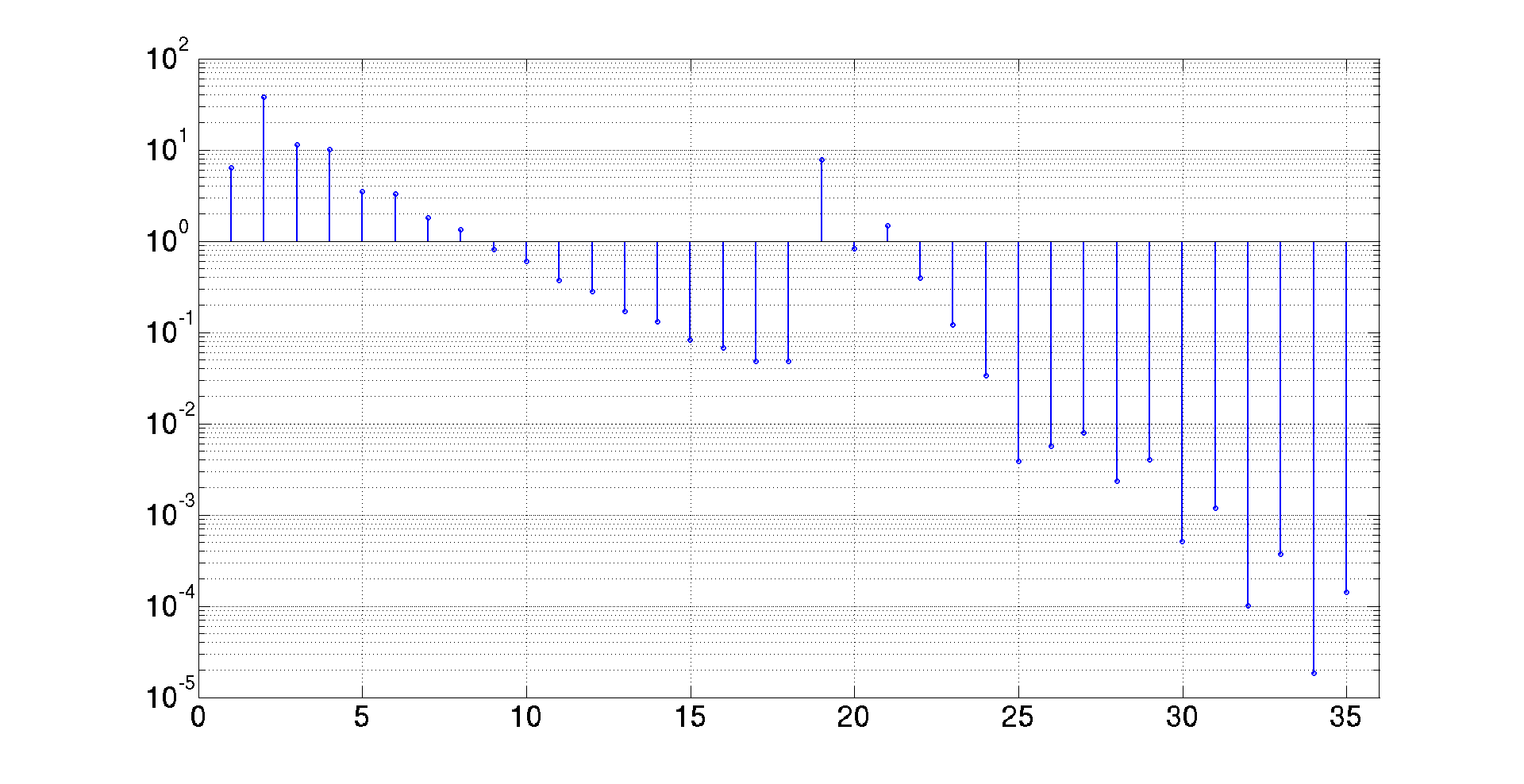}
\label{fig:ns03c}
}
\caption{Coefficients $b_{\alpha}$ for the cases $\sigma=((x-\beta)^{2}+0.1)^{-1}(y^{2}+0.1)^{-1}; \beta=1,\ 0.5,\ 1$.}
\end{figure}

The Tables \ref{tab:ns01}, \ref{tab:ns02} and \ref{tab:ns03} show some of the most relevant coefficients corresponding to the approached solutions of the cases $\beta=1,\ 0.5,\ 1$.

\begin{table}
\centering
\caption{\label{tab:ns01}Values of the coefficients $b_{\alpha}$ corresponding to the boundary value problem with $\sigma=(x^{2}+0.1)^{-1}(y^{2}+0.1)^{-1}$.}

\begin{tabular}{| c | c | c | c | c | c | c | c |}
\hline
$b_{1}$ & $b_{3}$ & $b_{5}$ & $b_{9}$ & $b_{19}$ & $b_{21}$ & $b_{23}$ & $b_{25}$\\
\hline
7.923 & 0.563 & -1.126 & -0.261 & -7.725 & 2.220 & -0.331 & 0.058 \\
\hline
\end{tabular}
\end{table}

\begin{table}
\centering
\caption{\label{tab:ns02}Values of the coefficients $b_{\alpha}$ corresponding to the boundary value problem with $\sigma=((x-0.5)^{2}+0.1)^{-1}(y^{2}+0.1)^{-1}$.}

\begin{tabular}{| c | c | c | c | c | c | c | c |}
\hline
$b_{0}$ & $b_{1}$ & $b_{2}$ & $b_{3}$ & $b_{19}$ & $b_{20}$ & $b_{21}$ & $b_{22}$\\
\hline
3.743 & 15.051 & 6.542 & -3.125 & -7.723 & 2.160 & 0.578 & -0.179 \\
\hline
\end{tabular}
\end{table}

\begin{table}
\centering
\caption{\label{tab:ns03}Values of the coefficients $b_{\alpha}$ corresponding to the boundary value problem with $\sigma=((x-1)^{2}+0.1)^{-1}(y^{2}+0.1)^{-1}$.}

\begin{tabular}{| c | c | c | c | c | c | c | c |}
\hline
$b_{0}$ & $b_{1}$ & $b_{2}$ & $b_{3}$ & $b_{4}$ & $b_{5}$ & $b_{6}$ & $b_{19}$\\
\hline
6.355 & 37.838 & 11.3959 & -10.118 & -3.517 & -3.311 & -1.815 & -7.856 \\
\hline
\end{tabular}
\end{table}

\subsection{Conductivities corresponding to geometrical distributions.}

Let us consider first a piecewise conductivity function, in polar coordinates, of the form:
\begin{displaymath}
   \sigma (x,y) = \left\{
     \begin{array}{lr}
       \ 100: &  r \in [0,0.2);\\
       \ 30\ : &  r \in [0.2,0.4);\\
       \ 20\ : &  r \in [0.4,0.6);\\
       \ 15\ : &  r \in [0.6,0.8);\\
       \ 30\ : &  r \in [0.8,1).
     \end{array}
   \right.   
\end{displaymath}
\begin{equation}
\label{ns:08}
\end{equation}
Here $r$ denotes the radio. The Figure \ref{fig:ns04} illustrates this conductivity. To select a boundary condition for the geometrical cases, without performing physical measurements, it is not a trivial task. Nevertheless, the conductivity defined in (\ref{ns:08}) shall be somehow related with the Lorentzian cases previously studied. For this reason, the boundary condition will be precisely the expression (\ref{ns:08}):
\[
u=\frac{x^3+y^3}{3}+0.1\left(x+y\right).
\]
\begin{figure}
\centering
\includegraphics[scale=0.30]{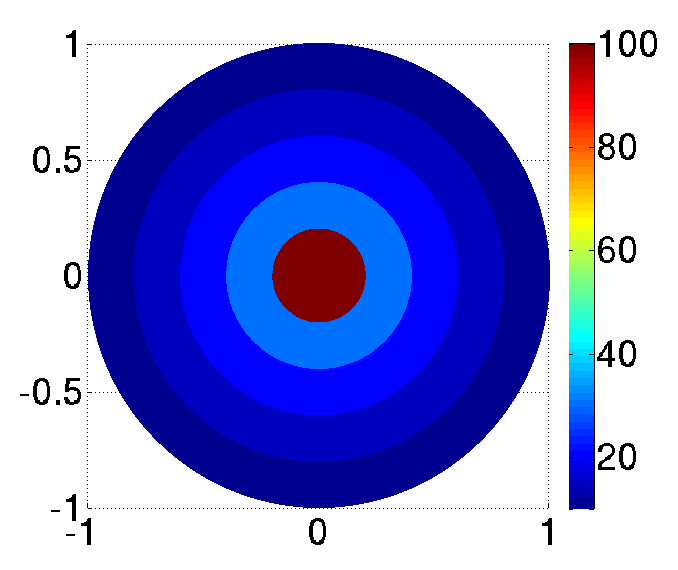}
\label{fig:ns04}
\caption{Conductivity defined according to the expression (\ref{ns:08}).}
\end{figure}

The Figure \ref{fig:ns05} displays the absolute value of the coefficients $b_{\alpha}$. Surprisingly, only four coefficients were significant for fulfilling the imposed boundary condition, and the total error was $\mathcal{E}=1.486\times 10^{-14}$. Again, the graphic comparing the boundary condition and the approach is omitted, since not any difference is visible.

\begin{figure}
\centering
\includegraphics[scale=0.20]{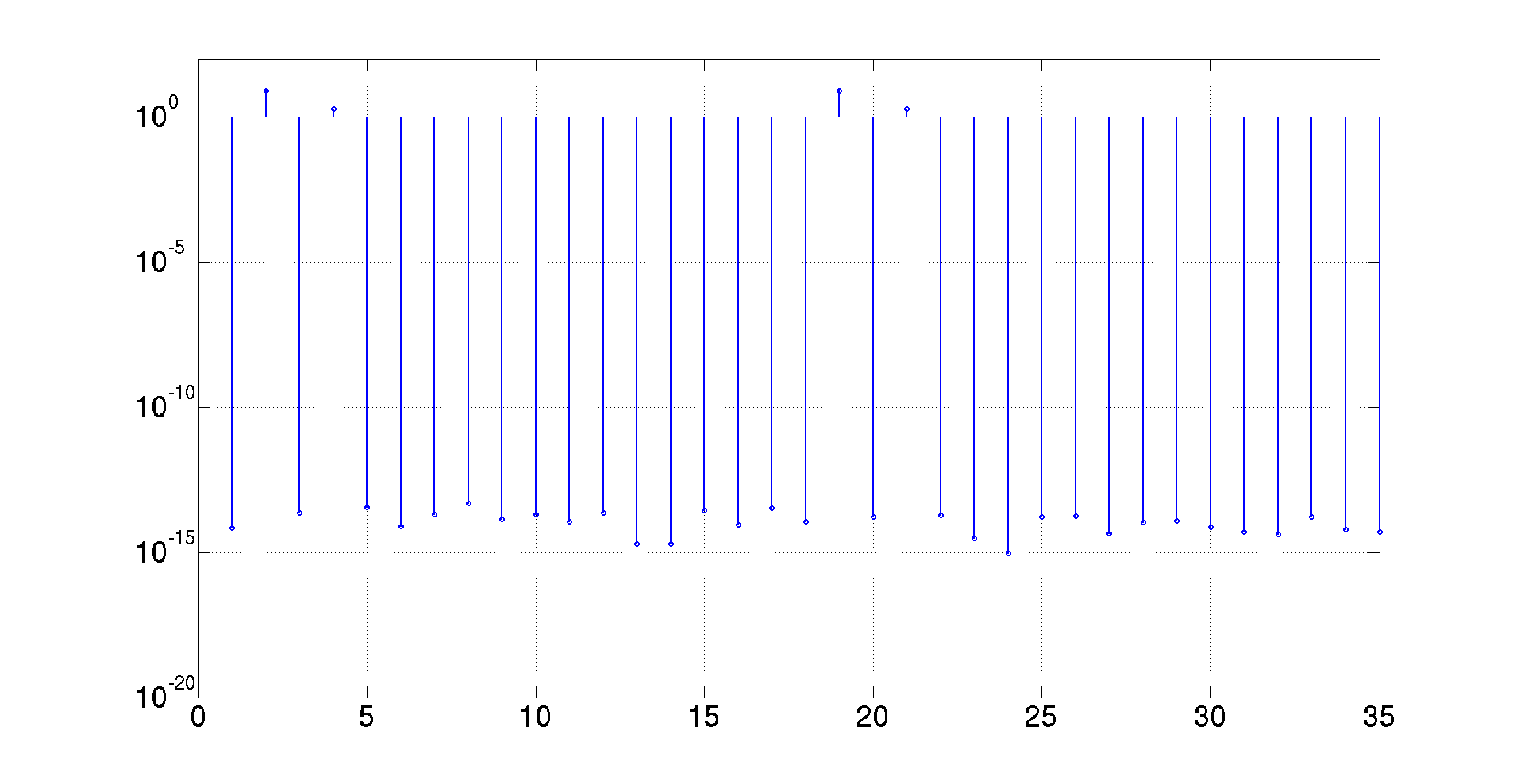}
\label{fig:ns05}
\caption{Conductivity defined according to the expression (\ref{ns:08}).}
\end{figure}

The Table \ref{tab:ns04} presents the values of the four most significant coefficients for this experiment, which also exhibit a very interesting symmetry.

\begin{table}
\centering
\caption{\label{tab:ns04}Values of the coefficients $b_{\alpha}$ corresponding to the boundary value problem with $\sigma$ defined in (\ref{ns:08}).}

\begin{tabular}{| c | c | c | c |}
\hline
$b_{1}$ & $b_{3}$ & $b_{19}$ & $b_{21}$ \\
\hline
7.826 & 1.863 & -7.826 & 1.863 \\
\hline
\end{tabular}
\end{table}

Let us consider now the conductivity functions illustrated in the Figures \ref{fig:ns06a}, \ref{fig:ns06b} and \ref{fig:ns06c}. The blue sections represent conductivity values of $\sigma=10$, whereas the red circles posses $\sigma=100$. The red disk in the Figure \ref{fig:ns06a}, corresponds to the equation
\[
x^2+y^2\leq 0.2,
\]
the red disk in the Figure \ref{fig:ns06b} is traced according to
\[
(x-0.6)^2+y^2\leq 0.2,
\]
whereas the red disk in the Figure \ref{fig:ns06c} is given by
\[
(x-0.79)^2+y^2\leq 0.2.
\]

\begin{figure}
\centering
\subfigure[Disk $x^2+y^2\leq 0.2$]{
\includegraphics[scale=0.25]{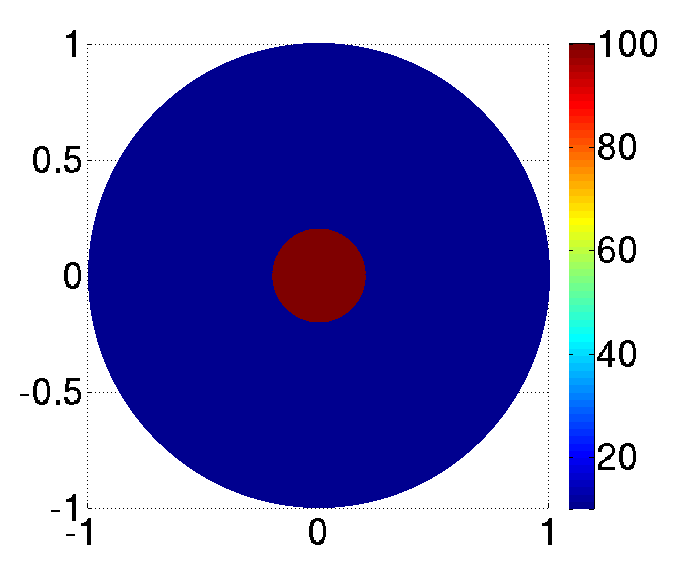}
\label{fig:ns06a}
}
\subfigure[Disk $(x-0.6)^2+y^2\leq 0.2$]{
\includegraphics[scale=0.25]{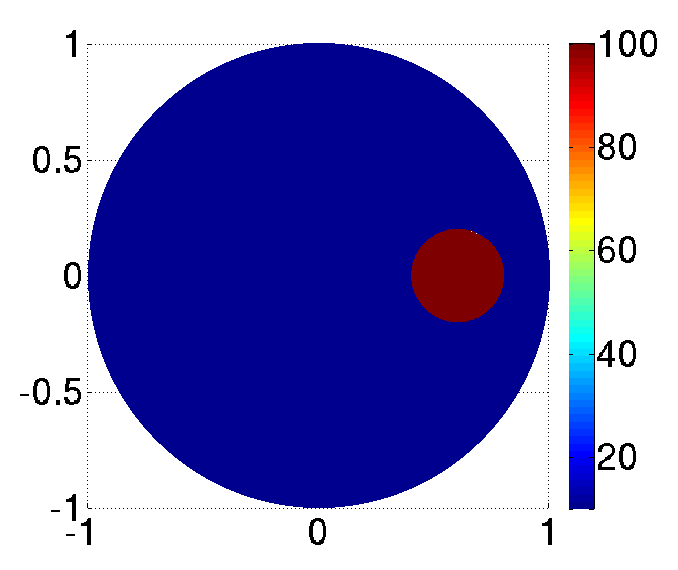}
\label{fig:ns06b}
}
\subfigure[Disk $(x-0.79)^2+y^2\leq 0.2$]{
\includegraphics[scale=0.25]{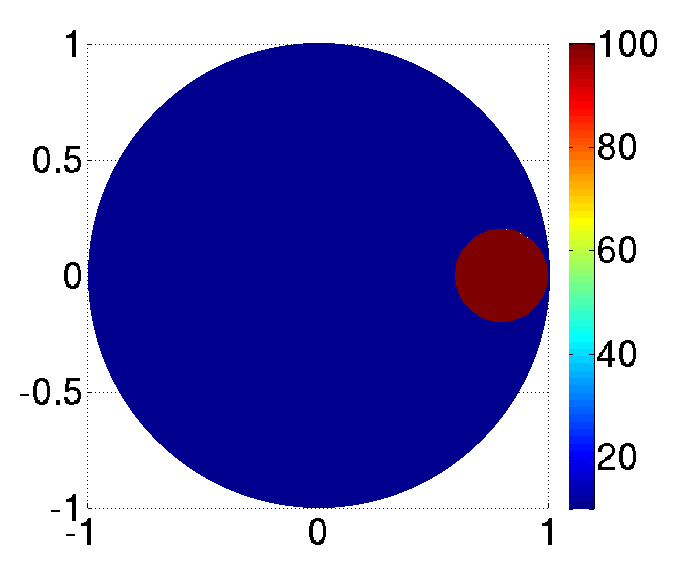}
\label{fig:ns06c}
}
\caption{Conductivity functions with a single disk within the domain. The red surfaces represent $\sigma=100$, whereas the blue sections denote $\sigma=10$.}
\end{figure}

The boundary condition for the conductivity illustrated in the Figure \ref{fig:ns06a} will be
\[
u=\frac{x^3+y^3}{3}+0.1\left(x+y\right),
\]
the condition for the conductivity function showed in the Figure \ref{fig:ns06b} will be
\[
u=\frac{(x-0.6)^3+y^3}{3}+0.1\left(x-0.6 +y\right);
\]
and finally, the condition imposed for the conductivity of the Figure \ref{fig:ns06c} was selected as
\[
u=\frac{(x-0.79)^3+y^3}{3}+0.1\left(x-0.79+y\right).
\]

The Figures \ref{fig:ns07a}, \ref{fig:ns07b} and \ref{fig:ns07c}, display the logarithmic graphics of the absolute values belonging to the coefficients $b_{\alpha}$. Once more, it is interesting that the quantity of significant values is relatively small. The absolute error approached for every case were: $\mathcal{E}=1.623 \times 10^{-14}$ for the case \ref{fig:ns06a}, $\mathcal{E}=7.3 \times 10^{-3}$ for the case \ref{fig:ns06b}, and $\mathcal{E}=3.9 \times 10^{-3}$ for the case \ref{fig:ns06c}.

\begin{figure}
\centering
\subfigure[Absolute values of the 35 coefficients employed for approaching the boundary condition of the conductivity \ref{fig:ns06a}.]{
\includegraphics[scale=0.2]{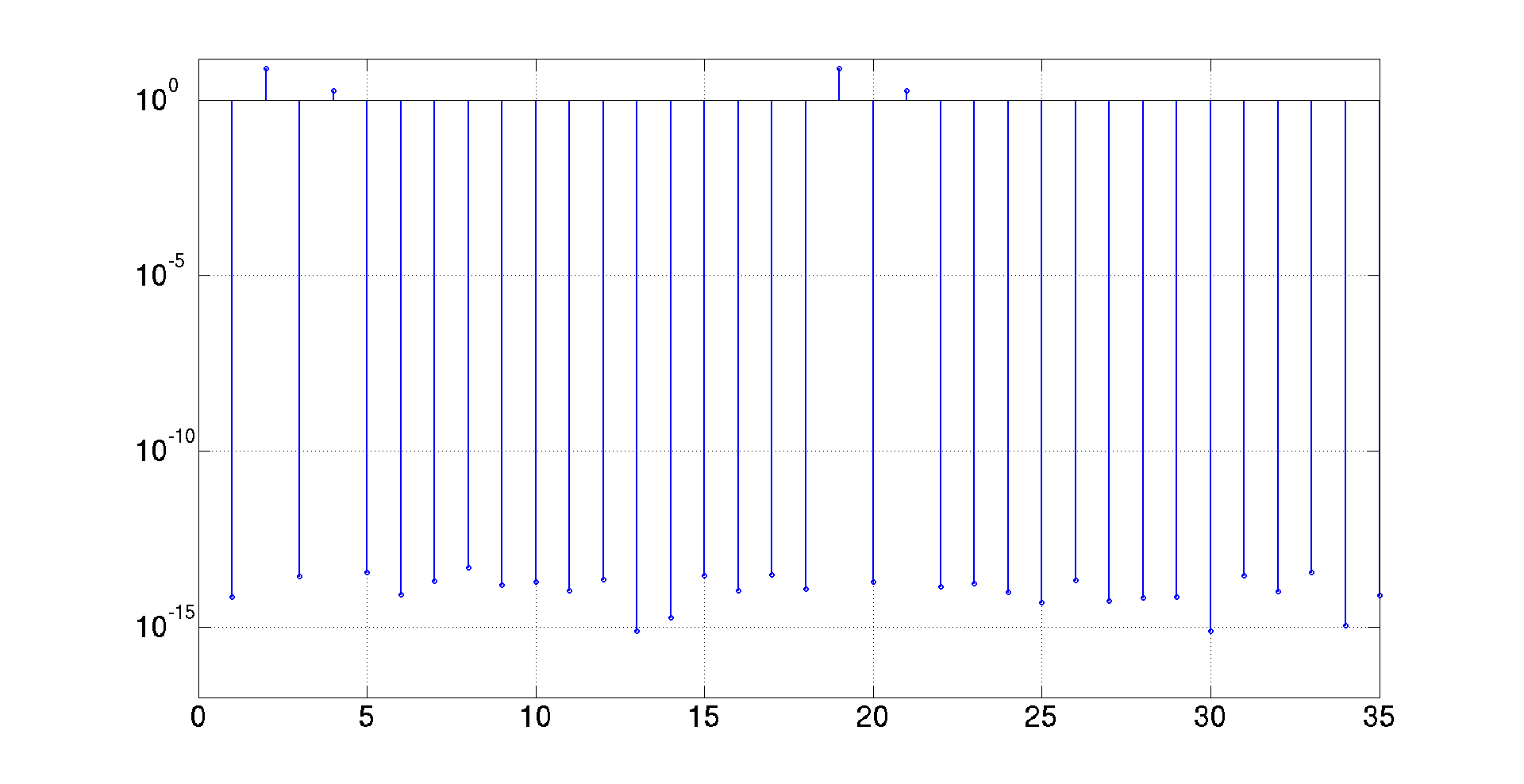}
\label{fig:ns07a}
}
\subfigure[Absolute values of the 35 coefficients employed for approaching the boundary condition of the conductivity \ref{fig:ns06b}.]{
\includegraphics[scale=0.2]{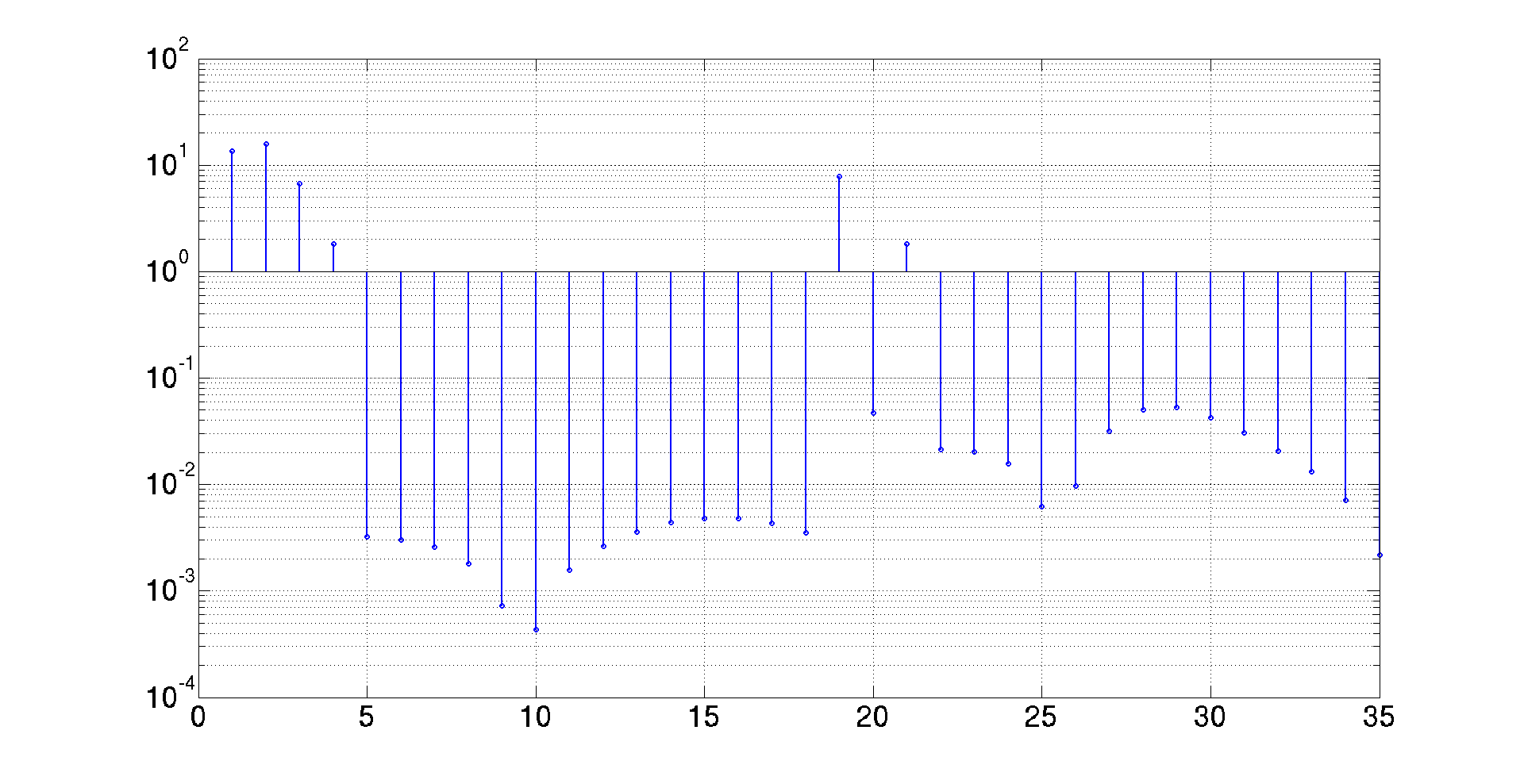}
\label{fig:ns07b}
}
\subfigure[Absolute values of the 35 coefficients employed for approaching the boundary condition of the conductivity \ref{fig:ns06c}.]{
\includegraphics[scale=0.2]{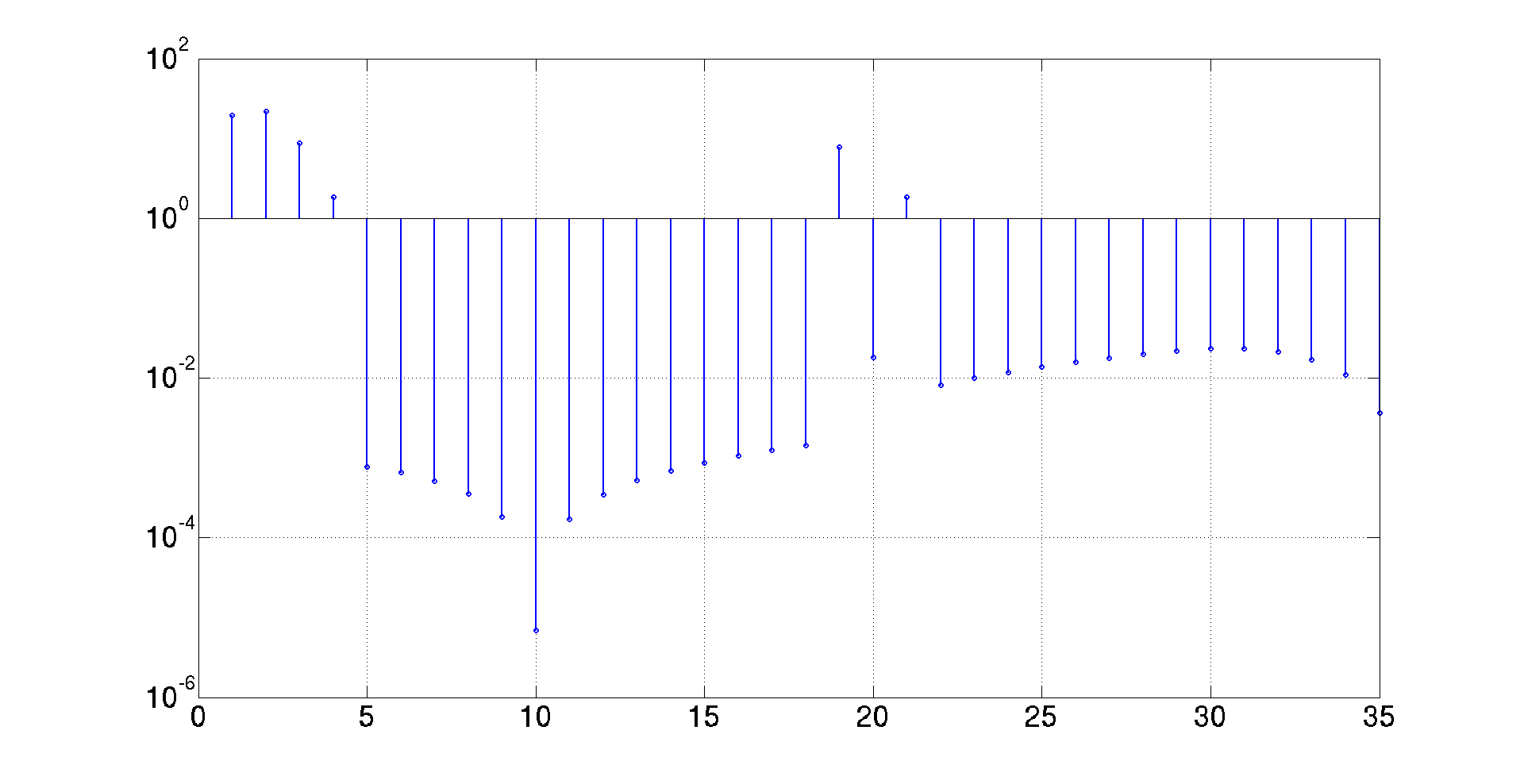}
\label{fig:ns07c}
}
\caption{Coefficients $b_{\alpha}$ for the cases of disks within the unitary circle.}
\end{figure}

As before, the Tables \ref{tab:ns05}, \ref{tab:ns06} and \ref{tab:ns07}, contain the values of the most significant coefficients for each of the last three examples.

\begin{table}
\centering
\caption{\label{tab:ns05}Values of the coefficients $b_{\alpha}$ corresponding to the boundary value problem with $\sigma$ posed in \ref{fig:ns06a}.}

\begin{tabular}{| c | c | c | c |}
\hline
$b_{1}$ & $b_{3}$ & $b_{19}$ & $b_{21}$ \\
\hline
7.826 & 1.863 & -7.826 & 1.863 \\
\hline
\end{tabular}
\end{table}

\begin{table}
\centering
\caption{\label{tab:ns06}Values of the coefficients $b_{\alpha}$ corresponding to the boundary value problem with $\sigma$ posed in \ref{fig:ns06b}.}

\begin{tabular}{| c | c | c | c | c | | c |}
\hline
$b_{0}$ & $b_{1}$ & $b_{2}$ & $b_{3}$ & $b_{19}$ & $b_{21}$ \\
\hline
-13.655 & 15.883 & -6.712 & 1.830 & -7.838 & 1.819 \\
\hline
\end{tabular}
\end{table}

\begin{table}
\centering
\caption{\label{tab:ns07}Values of the coefficients $b_{\alpha}$ corresponding to the boundary value problem with $\sigma$ posed in \ref{fig:ns06b}.}

\begin{tabular}{| c | c | c | c | c | c |}
\hline
$b_{0}$ & $b_{1}$ & $b_{2}$ & $b_{3}$ & $b_{19}$ & $b_{21}$ \\
\hline
-19.585 & 21.781 & -8.832 & 1.861 & -7.830 & 1.846 \\
\hline
\end{tabular}
\end{table}

\subsection{A triangular surface within the unitary circle.}

This is the last and one of the most interesting cases analysed in this work. We consider a triangle within the bounded domain, as displayed in the Figure 10, which somehow could be considered an interesting challenge when solving the Dirichlet problem of the two-dimensional Electrical Impedance Equation (\ref{int:00}), using numerical methods based upon variations of the Finite Element Method, one of the finest known numerical tools for this kind of problems.

\begin{figure}
\centering
\includegraphics[scale=0.30]{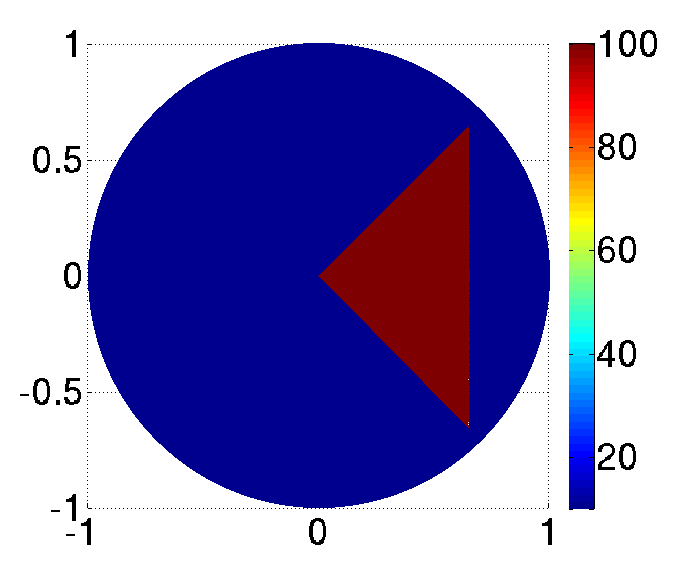}
\label{fig:ns08}
\caption{Conductivity containing a triangular figure. The red area represents $\sigma=100$, whereas the blue is $\sigma=10$.}
\end{figure}

The presence of corners requires special attention. Because of this, at least one radial trajectory in the calculations of the formal powers was force to cross over every corner of the triangle. Yet, the selection of the boundary condition is a completely different question. Indeed, this case will barely allow the imposition of theoretical conditions reaching acceptable convergence results.

For this reason, without denying the arbitrary selection of the function, the boundary condition was established as 
\[
u=\frac{(x-0.6)^3+y^3}{3}+0.1\left(x-0.6+y\right).
\]
This is indeed the only case for which $32$ formal powers were numerically approached, reaching a system of $61$ orthonormal functions. The total error was $\mathcal{E}=0.057$, but the graphic containing the comparison between the boundary condition and the approached solution, traced in Figure 11, deserves special attention.

\begin{figure}
\centering
\includegraphics[scale=0.2]{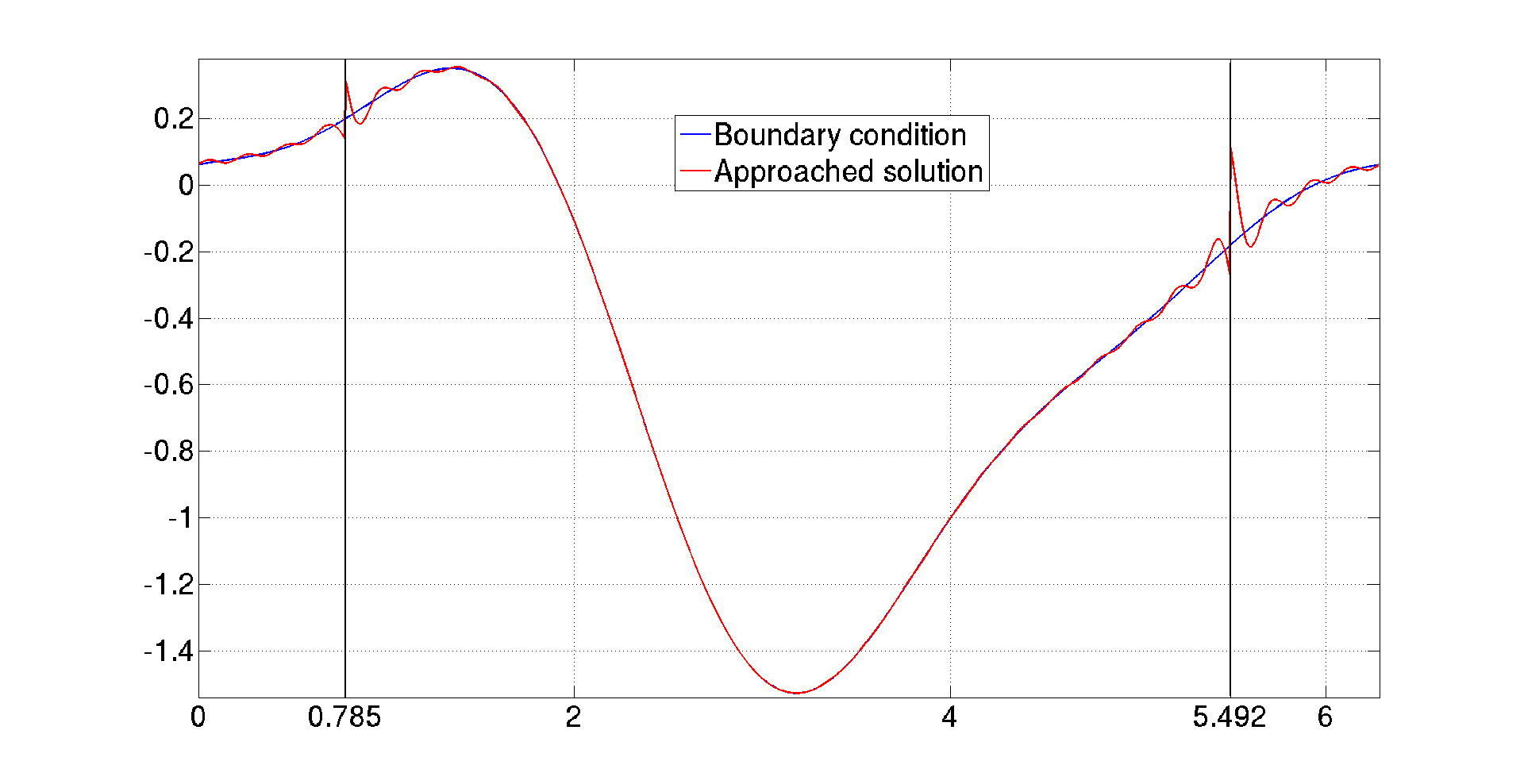}
\label{fig:ns09}
\caption{Comparison between the boundary condition and the approached solution of the conductivity function displayed in the Figure 10.}
\end{figure}

Let $l\in[0,2\pi)$. The highest divergence is located around the values $l=\frac{\pi}{4}$ and $l=\frac{5\pi}{4}$, as it is pointed out in the corresponding decimal representation on the Figure 11. These coincide precisely with the radii where the corners of the triangle are located, which posses interesting questions. Nevertheless, it is convenient to remember that the boundary condition was arbitrary selected, thus perhaps it is better to postpone the discussions until the moment that physical measurements provide the boundary conditions to be approached.

As for the rest of the posed examples, the Table \ref{tab:ns08} contains some of the most representative values of the coefficients $b_{\alpha}$ used for approaching the solution.

\begin{table}
\centering
\caption{\label{tab:ns08}Values of the coefficients $b_{\alpha}$ corresponding to the boundary value problem with $\sigma$ posed in Figure 10.}

\begin{tabular}{| c | c | c | c | c | c |}
\hline
$b_{0}$ & $b_{1}$ & $b_{2}$ & $b_{3}$ & $b_{32}$ & $b_{21}$ \\
\hline
-13.617 & 15.919 & -6.281 & 2.929 & -7.845 & 1.231 \\
\hline
\end{tabular}
\end{table}

\section{Conclusions}

On of the most important contributions of the present work is the full opening of the path for applying the Modern Theory of Pseudoanalytic Functions, into the analysis of wide class of conductivity functions upcoming from physical problems, by virtue of the Proposition \ref{pro:00}.

Tested in a variety of examples, the numerical analysis based upon the cited proposition, succeed to approach the imposed boundary conditions with considerable accuracy, presenting the highest divergence only in a kind of problem that is well known for its complexity when analysing boundary value problems for the two-dimensional Electrical Impedance Equation.

An immediate implication of these results is the possibility of analysing most classes of images corresponding to classical applications of the Electrical Impedance Tomography, as it is the Medical Imaging clinical monitoring. In this precise direction, it is possible to assure that the search for patterns of change in the boundary electric potentials, when changes in the conductivity within the domain of interest are taking place, is completely viable by applying the techniques posed before. It is important to remark that all presented results can be extended without mayor complications to a wide class of bounded domains, beside the unitary circle, that on behalf of simplicity was considered in these pages.

Indeed, most problems of Mathematical Physics, closely related with the Modern Pseudoanalytic Function Theory, as those elegantly studied in \cite{kpa}, as well as in \cite{hijar}, where a special case of the Fokker-Planck equation is posed, could well be susceptible for this kind of analysis.

Still, it is the Electrical Impedance Tomography problem one of the most interesting fields for applying the techniques suggested in this work.

\begin{acknowledgement} The author would like to acknowledge the support of CONACyT project 106722, and the support of HILMA S.A. de C.V., Mexico.
\end{acknowledgement}

\end{document}